\newcommand{\keywords}[1]{\par\addvspace\baselineskip
\noindent\keywordname\enspace\ignorespaces#1}
\newcommand{\argmax}{\operatornamewithlimits{argmax}}
\newtheorem{thm}{Theorem}
\newtheorem{lem}{Lemma}
\newtheorem{clm}{Claim}
\newtheorem{coro}{Corollary}
\begin{document}

\setstretch{0.96}

\mainmatter

\title{Minimax Regret 1-Median Problem \\ in Dynamic Path Networks}

\author{
Yuya Higashikawa \inst{1,6}
\and Siu-Wing Cheng \inst{2}
\and Tsunehiko Kameda \inst{3}
\and Naoki Katoh \inst{4,6}\thanks{Supported by JSPS Grant-in-Aid for Scientific Research(A)(25240004)}
\and Shun Saburi \inst{5}
}

\institute{
{Department of Information and System Engineering, Chuo University, Japan, 
higashikawa.874@g.chuo-u.ac.jp}
\and
{Department of Computer Science and Engineering, The Hong Kong University of Science and Technology, Hong Kong, 
scheng@cse.ust.hk}
\and
{School of Computing Science, Simon Fraser University, Canada,
tiko@sfu.ca}
\and
{Department of Informatics, Kwansei Gakuin University, Japan,
naoki.katoh@gmail.com}
\and
{Department of Architecture and Architectural Engineering, Kyoto University, Japan,
as-saburi@archi.kyoto-u.ac.jp}
\and
{CREST, Japan Science and Technology Agency (JST), Japan}
}

\maketitle

\begin{abstract}
This paper considers the minimax regret 1-median problem in dynamic path networks.
In our model, we are given a dynamic path network consisting of an undirected path with positive edge lengths, uniform positive edge capacity,
and nonnegative vertex supplies.
Here, each vertex supply is unknown but only an interval of supply is known.
A particular assignment of supply to each vertex is called a {\it scenario}. 
%
Given a scenario $s$ and 
a sink location
$x$ in a dynamic path network,
let us consider the evacuation time to $x$ of a unit supply
given on a vertex by $s$. 
The cost of $x$ under $s$ is defined as the sum of evacuation times to $x$ for all supplies given by $s$, 
and the {\it median} under $s$ is defined as a sink location which minimizes this cost.
The regret for $x$ under $s$ is defined as the cost of $x$ under $s$ minus the cost of the median under $s$. 
Then, the problem is to find a sink location such that the maximum regret for all possible scenarios is minimized.
%
%
We propose an $O(n^3)$ time algorithm for the minimax regret 1-median problem in dynamic path networks with uniform capacity, 
where $n$ is the number of vertices in the network. 
\keywords{minimax regret, sink location, dynamic flow, evacuation planning}
\end{abstract}

\section{Introduction}
The Tohoku-Pacific Ocean Earthquake happened in Japan on March 11, 2011, 
and many people failed to evacuate and lost their lives due to severe attack by tsunamis. 
From the viewpoint of disaster prevention from city planning and evacuation planning,  
it has now become extremely important to establish effective evacuation planning systems against large scale disasters in Japan. 
In particular, arrangements of tsunami evacuation buildings in large Japanese cities near the coast has become an urgent issue. 
To determine appropriate tsunami evacuation buildings, we need to consider where evacuation buildings are located 
and how to partition a large area into small regions so that one evacuation building is designated in each region. 
This produces several theoretical issues to be considered. 
Among them, this paper focuses on the location problem of the evacuation building assuming that we fix the region such that all evacuees in the region are planned to evacuate to this building. 
In this paper, we consider the simplest case for which the region consists of a single road. \\
\indent
In order to represent the evacuation, we consider the {\it dynamic} setting in graph networks, which was first introduced by Ford et al.~\cite{ff58}.
In a graph network under the dynamic setting, each vertex is given supply and each edge is given length and capacity which limits the rate of the flow into the edge per unit time.
We call such networks under the dynamic setting {\it dynamic networks}.
Unlike in static networks, the time required to move supply from one vertex to a sink can be increased due to congestion caused by the capacity constraints, 
which require supplies to wait at vertices until supplies preceding them have left.
In this paper, we consider the flow on dynamic networks as continuous, that is, each input value is given as a real number, and supply, flow and time are defined continuously.
Then each supply can be regarded as fluid, and edge capacity is defined as the maximum amount of supply which can enter an edge per unit time.
The {\it 1-sink location problem in dynamic networks} is defined as the problem which requires to find the optimal location of a sink in a given dynamic network 
so that all supplies are sent to the sink as quickly as possible. \\
\indent
In order to evaluate an evacuation, we can naturally consider two types of criteria: {\it completion time criterion} and {\it total time criterion}.
In this paper we adopt the latter one (for the former one, refer to \cite{h14,hgk14_2,hgk14_4,mumf06}).
We here define a {\it unit} as an infinitesimally small portion of supply.
Given a sink location $x$ in a dynamic network, let us consider an evacuation to $x$ starting at time $0$ 
and define the {\it evacuation time} of a unit to $x$ as the time at which the unit reaches $x$ in the evacuation.
The total time for the evacuation to $x$ is defined as the sum of evacuation times over all infinitesimal units to $x$.
Then, the minimum total time for all possible evacuations to $x$ could be the criterion for the optimality of sink location, which we adopt.
Given a dynamic network, we define the {\it 1-median problem} as the problem which requires to find a sink location minimizing the minimum total time, 
and the optimal solution is called the {\it median}.\\
\indent
Although the above criterion is reasonable for the sink location, it may not be practical since the number of evacuees in an area may vary depending on the time
(e.g., in an office area in a big city, there are many people during the daytime on weekdays while there are much less people on weekends or during the night time). 
So, in order to take into account the uncertainty of population distribution,
we consider the {\it maximum regret} for a sink location as another evaluation criterion assuming that for each vertex, we only know an interval of vertex supply. 
Then, the {\it minimax regret 1-median problem in dynamic path networks} is formulated as follows.
A particular assignment of supply to each vertex is called a {\it scenario}.
Here, for a sink location $x$ and a scenario $s$, we denote the minimum total time by $\Phi^s(x)$.
Also let $m^s$ denote the median under $s$.
The problem can be understood as a 2-person Stackelberg game as follows.
The first player picks a sink location $x$ and the second player chooses a scenario $s$ that maximizes the {\it regret} defined as $\Phi^s(x)-\Phi^s(m^s)$.
The objective of the first player is to choose $x$ that minimizes the maximum regret. \\
\indent
Related to the minimax regret facility location in graph networks, especially for trees, some efficient algorithms have been presented by~\cite{ab00,bk12,bks13,bgk08,cl98,c08}.
For dynamic networks, Cheng et al.~\cite{chknsx13} first studied the
{\it minimax regret 1-center problem} in path networks,
which requires to find a sink location in a path that minimizes the maximum regret where the completion time criterion is adopted instead of the total time one.
They presented an $O(n \log^2 n)$ time algorithm.
Higashikawa et al.~\cite{hacknsx14} improved the time bound by \cite{chknsx13} to $O(n \log n)$, and also Wang~\cite{w13} independently achieved the same time bound of $O(n \log n)$ with better space complexity.
Very recently, Bhattacharya et al.~\cite{bk14} have improved the time bound to $O(n)$.
The above problem was extended to the multiple sink location version by Arumugam et al.~\cite{aags14}.
For the minimax regret $k$-center problem in dynamic path networks with uniform capacity, they presented an $O(kn^3 \log n)$ time algorithm, and this time bound was improved to $O(kn^3)$ recently~\cite{h14}.
On the other hand, for dynamic tree networks, only the minimax regret 1-center problem was solved in $O(n^2 \log^2 n)$ time~\cite{hgk14,hgk14_3}.

This paper first considers the minimax regret median problem in dynamic networks while all the above works for dynamic networks treated center problems.
In this paper, we address the minimax regret 1-median problem in dynamic path networks with uniform capacity and present an $O(n^3)$ time algorithm. \\


\section{Preliminaries}
\label{sec:pre}

\subsection{Dynamic path networks under uncertain supplies}
\label{subsec:dpn}
Let $P =(V, E)$ be an undirected path with ordered vertices  $V = \{ v_1, v_2, \ldots, v_n \}$ and  edges $E = \{ e_1, e_2, \ldots, e_{n-1} \}$ where
 $e_i=(v_i,v_{i+1})$ for $i \in \{1, \ldots, n-1\}$.
Let $\mathcal{N} = (P, l, w, c, \tau)$ be a dynamic network 
with the underlying path graph $P$; 
$l$ is a function that associates each edge $e_i$ with positive length $l_i$, 
$w$ is a function that associates each vertex $v_i$ with positive weight $w_i$, amount of supply at $v_i$;
$c$ is the capacity, a positive constant representing the amount of supply which can enter an edge per unit time;
$\tau$ is a positive constant representing the time required for a flow to travel a unit distance.
In our model, instead of the weight function $w$ on vertices, we are given the weight interval function $W$
that associates each vertex $v_i \in V$ with an interval of supply $W_i = [w^-_i, w^+_i]$ with $0 < w^-_i \leq w^+_i$.
We call such a network $\mathcal{N} = (P, l, W, c, \tau)$ with path structures {\it a dynamic path network under uncertain supplies}.

In the following, we write $p \in P$ to indicate that a point is a vertex of $P$ or lies on one of the edges of $P$.  
For any point $p \in P$,
we abuse this notation by also letting $p$ denote the distance from $v_1$ to $p$. 
Informally we can regard $P$ as being embedded on a real line with $v_1 = 0$.
For two points $p, q \in P$ satisfying $p < q$, 
let $[p, q]$ (resp. $[p, q)$, $(p, q]$ and $(p, q)$) denote an interval in $P$ consisting of all points $x \in P$ 
such that $p \le x \le q$ (resp. $p \le x < q$, $p < x \le q$ and $p < x < q$).

\subsection{Scenarios}
\label{subsec:scn}
Let ${\cal S}$ denote the Cartesian product of all $W_i$ for $i \in \{1, \ldots, n\}$:
\begin{eqnarray}
{\cal S} = \prod_{i=1}^n W_i.
\label{eq:sc}
\end{eqnarray}
An element of ${\cal S}$, i.e., a particular assignment of weight to each vertex, is called a {\it scenario}.
Given a scenario $s \in {\cal S}$, we denote by $w^s_i$ the weight of a vertex $v_i$ under $s$.

\subsection{Total evacuation time}
\label{subsec:to}
In our model, the supply is defined continuously.
We define a unit as an infinitesimally small portion of supply.
Given a sink location $x \in P$ and a scenario $s \in {\cal S}$, 
without loss of generality, an evacuation to $x$ under $s$ is assumed to satisfy the following assumptions.
When a unit arrives at a vertex $v$ on its way to $x$, it has to wait for the departure if there are already some units waiting for leaving $v$. 
All units waiting at $v$ for leaving $v$ are processed in the first-come first-served manner. 

For a given $x \in P$ and $s \in {\cal S}$, let us consider an evacuation to $x$ under $s$ starting at time $0$
and define the evacuation time of a unit to $x$ under $s$ as the time at which the unit reaches $x$.
Let $\Phi^s(x)$ denote the sum of evacuation times over all infinitesimal units to $x$ under $s$.
Also let $\Phi^s_L(x)$ (resp. $\Phi^s_R(x)$) denote the sum of evacuation times to $x$ under $s$ for all units on $[v_1, x)$ (resp. $(x, v_n]$).
Then, $\Phi^s(x)$ is obviously the sum of $\Phi^s_L(x)$ and $\Phi^s_R(x)$, i.e., 
\begin{eqnarray}
\Phi^s(x) = \Phi^s_L(x) + \Phi^s_R(x). \label{eq:ms1}
\end{eqnarray}
Without loss of generality, we assume $\Phi^s_L(v_1) = 0$ and $\Phi^s_R(v_n) = 0$.

We will show the formula of $\Phi^s(x)$ that has been proved in \cite{hgk14_2,hgk14_4}.
Suppose that $x$ is located in an open interval $(v_h, v_{h+1})$ with $1 \le h \le n-1$, i.e., $x \in e_h$.
We here show only the formula of $\Phi^s_L(x)$ (the case of $\Phi^s_R(x)$ is symmetric).
First, let us define the vertex indices $\rho_1, \ldots, \rho_k$ inductively as
\begin{eqnarray}
	\rho_i = \argmax \left\{ \tau(v_h - v_j) + \frac{\sum_{l=\rho_{i-1}+1}^j w^s_l}{c} \ \bigg| \ j \in \{\rho_{i-1}+1, \ldots, h\} \right\},
\label{ms2.5}
\end{eqnarray}
where $\rho_0 = 0$.
Obviously $\rho_k = h$ holds.
We then call a set of all units on $[v_{\rho_{i-1}+1},v_{\rho_i}]$ {\it i-th cluster},
and a vertex $v_{\rho_i}$ the {\it head} of $i$-th cluster (see Fig.~\ref{fig:cl}).
\begin{figure}[b]
\centering
\includegraphics[width=90mm,clip]{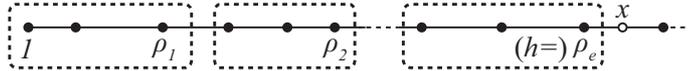} 
\caption{Illustration of left clusters for $x$ where $i$-th cluster is headed by a vertex $v_{\rho_i}$}
\label{fig:cl}
\end{figure}

\noindent
Also, for $i \in \{1, \ldots, k\}$, we define $\sigma_i$ as $\sigma_i = \sum_{l =\rho_{i-1}+1}^{\rho_i} w^s_l$,
which is called the weight of $i$-th cluster.
The interpretation can be derived from \cite{hgk14_2,hgk14_4} as follows.
The first unit on each $v_{\rho_i}$ does not encounter any congestion on its way to $x$. 
Here, although $\rho_i$ may not be uniquely determined by (\ref{ms2.5}),
we choose the maximum index as $\rho_i$.
By this assumption and (\ref{ms2.5}), the following inequality holds for $i \in \{2, \ldots, k\}$:
\begin{eqnarray}
\tau(v_{\rho_i}-v_{\rho_{i-1}}) > \frac{\sigma_i}{c}.
\label{eq:ms2.6}
\end{eqnarray}
In other words, even if we transform the input so that all units on $[v_{\rho_{i-1}+1},v_{\rho_i})$ are moved to $v_{\rho_i}$ for $i \in \{1, \ldots, k\}$,
the sum of evacuation times over $[v_1,x)$ never changes.
In the following, we call such a transformation the {\it left-clustering} for $x$ and
clusters obtained by the left-clustering for $x$ {\it left clusters} for $x$
(the right-clustering and right clusters are symmetric).
Then as in \cite{hgk14_2,hgk14_4}, $\Phi^s_L(x)$ is written as 
\begin{eqnarray}
\Phi^s_L(x)  	&=& \sum_{i=1}^k \left( \sigma_i\tau(x-v_{\rho_i}) + \frac{{\sigma_i}^2}{2c} \right). 
\label{eq:ms3}
\end{eqnarray}

\subsection{Minimax regret formulation}
\label{subsec:mi}
For a scenario $s \in {\cal S}$, let $m^s$ be a point in $P$ that minimizes $\Phi^s(x)$ over $x \in P$, called the {\it median} under $s$. 
We now define the {\it regret} for $x$ under $s$ as
\begin{eqnarray}
R^s(x) &=&  \Phi^s(x) - \Phi^s(m^s).
\label{eq:rg}
\end{eqnarray}
Moreover, we also define the {\it maximum regret} for $x$ as
\begin{eqnarray}
R_{\rm max}(x) &=& \max \{ R^s(x) \mid s \in \mathcal{S} \}.
\label{eq:rmax}
\end{eqnarray}
If $\hat{s} = \argmax \{ R^s(x) \mid s \in \mathcal{S} \}$, 
we call $\hat{s}$ the {\it worst case scenario} for $x$.
The goal is to find a point $x^* \in P$, called the {\it minimax regret median}, that minimizes $R_{\rm max}(x)$ over $x \in P$,
i.e., the objective is to
\begin{eqnarray}
{\rm minimize} \ \{ R_{\rm max}(x) \mid x \in P\}.
\label{eq:objective}
\end{eqnarray}

\subsection{Known properties for the fixed scenario case}
\label{subsec:kp}
We here show some properties on the 1-median problem in a dynamic path network $\mathcal{N} = (P = (V, E), l, w^s, c, \tau)$ when a scenario $s \in {\cal S}$ is given, which were basically presented in \cite{hgk14_2,hgk14_4}.
We first introduce the following two lemmas.

\begin{lem}\hspace{-1mm}{\rm \cite{hgk14_2,hgk14_4}}
For a scenario $s \in {\cal S}$, $m^s$ is at a vertex in $V$.
\label{lem:ms}
\end{lem}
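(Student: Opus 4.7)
The plan is to exploit formula~(\ref{eq:ms3}) to show that on each open edge the function $\Phi^s$ is linear, and that the slopes across consecutive edges are monotone nondecreasing; hence $\Phi^s$ is convex and piecewise linear, so its minimum is attained at a vertex. Throughout I fix a scenario $s$ and write $w_l$ for $w_l^s$.

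First, I would fix an edge $e_h=(v_h,v_{h+1})$ and observe that the cluster indices $\rho_1,\dots,\rho_k$ produced by (\ref{ms2.5}) depend on $x$ only through the index $h$ of the last vertex to the left of $x$. So for every $x\in(v_h,v_{h+1})$ the left-clusters and the weights $\sigma_1,\dots,\sigma_k$ are the same, and (\ref{eq:ms3}) gives
\begin{equation*}
\Phi^s_L(x)=\tau x\sum_{i=1}^{k}\sigma_i \;-\;\tau\sum_{i=1}^{k}\sigma_i v_{\rho_i}\;+\;\sum_{i=1}^{k}\frac{\sigma_i^2}{2c}.
\end{equation*}
Since $\{\rho_{i-1}+1,\dots,\rho_i\}$ partitions $\{1,\dots,h\}$, we have $\sum_i\sigma_i=\sum_{l=1}^{h}w_l$. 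Therefore $\Phi^s_L$ is linear on $(v_h,v_{h+1})$ with slope $\tau\sum_{l=1}^{h}w_l$. By the symmetric right-clustering argument, $\Phi^s_R$ is linear on the same open edge with slope $-\tau\sum_{l=h+1}^{n}w_l$. Adding these via (\ref{eq:ms1}),
\begin{equation*}
\left.\frac{d\Phi^s}{dx}\right|_{x\in(v_h,v_{h+1})} \;=\; \tau\Bigl(\sum_{l=1}^{h}w_l-\sum_{l=h+1}^{n}w_l\Bigr).
\end{equation*}

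Next, I would check continuity of $\Phi^s$ at each internal vertex $v_{h+1}$. Intuitively a sink moved by an infinitesimal amount changes each unit's evacuation time by an infinitesimal amount, so $\Phi^s$ is continuous; formally one notes that the cluster decomposition on the two adjacent open edges yields two linear expressions whose values agree at $v_{h+1}$ (the right-most cluster head on the left side becomes $v_{h+1}$ itself when $x$ crosses that vertex, so the contribution from $v_{h+1}$ moves from $\Phi^s_R$ into $\Phi^s_L$ continuously). Combined with the previous step, $\Phi^s$ is piecewise linear on $P$ with breakpoints exactly at $v_1,\dots,v_n$.

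Finally, comparing the slopes on $(v_{h-1},v_h)$ and $(v_h,v_{h+1})$,
\begin{equation*}
\tau\Bigl(\sum_{l=1}^{h}w_l-\sum_{l=h+1}^{n}w_l\Bigr)\;-\;\tau\Bigl(\sum_{l=1}^{h-1}w_l-\sum_{l=h}^{n}w_l\Bigr)\;=\;2\tau w_h\;>\;0,
\end{equation*}
so the slopes of $\Phi^s$ are strictly increasing as $x$ moves from $v_1$ to $v_n$. Hence $\Phi^s$ is a convex, continuous, piecewise linear function on the path, and its minimum over $P$ is attained at a breakpoint, i.e., at some vertex of $V$, proving the lemma.

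The main obstacle I anticipate is the bookkeeping at the vertex boundaries: showing that the cluster structure, which is defined only for $x$ in an open edge, produces values of $\Phi^s_L$ and $\Phi^s_R$ that match up continuously at each $v_h$ (in particular that the single vertex $v_h$ is accounted for exactly once when crossing). Everything else reduces to reading off the slope from the closed form in (\ref{eq:ms3}) and summing the left and right contributions.
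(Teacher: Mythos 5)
The paper itself does not prove this lemma; it imports it from \cite{hgk14_2,hgk14_4}, so your argument has to stand on its own. Your slope computation is correct and useful: on each open edge $e_h$ the cluster partition is fixed, $\sum_i \sigma_i = \sum_{l=1}^{h} w^s_l$, so by (\ref{eq:ms3}) and its mirror image $\Phi^s$ is linear on $(v_h, v_{h+1})$ with slope $\tau\bigl(\sum_{l \le h} w^s_l - \sum_{l > h} w^s_l\bigr)$, and consecutive slopes differ by $2\tau w^s_h > 0$. The genuine gap is the continuity step: in this dynamic-flow model $\Phi^s$ is \emph{not} continuous at the vertices, so it is not convex on $P$ and the ``minimum of a convex piecewise-linear function is at a breakpoint'' conclusion cannot be drawn as stated. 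Concretely, take $n=2$: for $x \in (v_1, v_2)$ the supply at $v_2$ must enter the edge at rate $c$, contributing $w_2\tau(v_2 - x) + w_2^2/(2c)$ to $\Phi^s(x)$, so $\lim_{x \to v_2^-} \Phi^s(x) = w_1\tau(v_2 - v_1) + w_1^2/(2c) + w_2^2/(2c)$, whereas $\Phi^s(v_2) = w_1\tau(v_2 - v_1) + w_1^2/(2c)$: there is a strict downward jump of $w_2^2/(2c)$. This is exactly the congestion term $\sigma_i^2/(2c)$ in (\ref{eq:ms3}) for the cluster headed by the vertex the sink approaches; it appears in the one-sided limits but not in the value at the vertex, since units located at the sink evacuate in time $0$. (The two one-sided limits at an interior vertex need not even agree, because the cluster partition can change as $x$ crosses the vertex and clusters merge.)

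The lemma is nevertheless true, and your setup almost yields it, but for a different reason than convexity. Since $\Phi^s$ is linear on each open edge, its infimum over $(v_h, v_{h+1})$ equals one of the one-sided limits $\lim_{x \to v_h^+} \Phi^s(x)$ or $\lim_{x \to v_{h+1}^-} \Phi^s(x)$; what you actually need is lower semi-continuity, i.e., that each such limit is at least the value of $\Phi^s$ at the corresponding vertex. This requires its own (short) argument: every unit on the far side of $v_h$ reaches a sink at $x$ near $v_h$ no earlier than it would reach a sink at $v_h$ (it traverses the same congested prefix and then must still cover $|x - v_h|$), while the discrepancy for units on the near side vanishes as $x \to v_h$. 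With that in hand, the global minimum over $P$ is attained at a vertex. The slope monotonicity you established is not needed for this lemma (it is, however, the standard tool for locating \emph{which} vertex is the median, via the weighted-median condition). So: keep the linearity-on-open-edges part, but replace the continuity/convexity claim by a lower-semicontinuity argument at the vertices.
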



\begin{lem}\hspace{-1mm}{\rm \cite{hgk14_2,hgk14_4}}
For a scenario $s \in {\cal S}$, all $\Phi^s(v_i)$ over $i \in \{1, \ldots, n\}$ can be computed in $O(n)$ time in total.
\label{lem:pht}
\end{lem}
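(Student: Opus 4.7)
The plan is to compute $\Phi^s_L(v_i)$ for all $i$ by a single left-to-right sweep that maintains the current left-cluster decomposition on a stack, and to compute $\Phi^s_R(v_i)$ symmetrically by a right-to-left sweep; summing the two arrays gives $\Phi^s(v_i)$ via (\ref{eq:ms1}).

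The key algebraic observation is that (\ref{eq:ms3}) can be rewritten as $\Phi^s_L(v_h) = \tau v_h A - \tau B + C$, where
\begin{eqnarray*}
A = \sum_{i=1}^k \sigma_i, \quad B = \sum_{i=1}^k \sigma_i v_{\rho_i}, \quad C = \sum_{i=1}^k \frac{\sigma_i^2}{2c}
\end{eqnarray*}
depend only on the current decomposition. Hence, once the stack is correctly maintained, each $\Phi^s_L(v_h)$ is produced in $O(1)$ time from the three aggregates $A$, $B$, $C$.

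When the sink moves from $v_h$ to $v_{h+1}$, the vertex $v_h$ newly joins the left side. I would create a tentative rightmost cluster with head $v_h$ and initial weight $w^s_h$, and then repeatedly absorb the top of the stack while (\ref{eq:ms2.6}) is violated against that tentative cluster; that is, as long as $\tau(v_h - v_{\rho}) \le \widehat{\sigma}/c$, where $v_\rho$ is the head of the current stack top and $\widehat{\sigma}$ is the accumulated weight of the tentative cluster. Each absorption pops one cluster and subtracts its contribution from $(A, B, C)$ in $O(1)$; the final push adds the merged contribution. The correctness of this update relies on shift invariance of (\ref{ms2.5}): translating $x$ to the right by a constant adds the same amount to every $\tau(v_h - v_j)$ term, so the argmax ranking among already-eligible indices is preserved, and the only possible change is the formation of a new rightmost cluster headed at $v_h$ that absorbs some suffix of the previous clusters.

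Each vertex is pushed onto the stack at most once and popped at most once across the entire sweep, so the total cost of the absorption loops is $O(n)$ amortized. The symmetric right-to-left sweep computes all $\Phi^s_R(v_i)$ in $O(n)$, and adding the two arrays yields $\Phi^s(v_i)$ for all $i$ in $O(n)$ overall. The main technical point requiring care is the equivalence between the pairwise separation condition (\ref{eq:ms2.6}), which the absorption loop restores at each iteration, and the full argmax characterization (\ref{ms2.5}); this can be established by a telescoping argument that combines the internal argmax conditions inherited from each absorbed old cluster with the distance-weight inequalities imposed at the absorption steps, showing that the argmax within the merged cluster is indeed attained at $v_h$.
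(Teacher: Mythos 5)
The paper does not prove this lemma; it is imported verbatim from \cite{hgk14_2,hgk14_4}, so there is no in-paper proof to compare against, and your proposal must be judged on its own. It is correct, and it is essentially the standard linear-time sweep from those references: maintain the left clusters on a monotone stack together with the aggregates $A$, $B$, $C$, output $\Phi^s_L(v_h)=\tau v_h A-\tau B+C$ in $O(1)$ per vertex, and charge the absorption loop to pushes and pops. The one step you flag as delicate --- that restoring the pairwise separation condition (\ref{eq:ms2.6}) only against the current stack top recovers the full argmax characterization (\ref{ms2.5}) --- resolves more cleanly than a telescoping argument: the objective in (\ref{ms2.5}) depends on $j$ only through $G(j)=-\tau v_j+\frac{1}{c}\sum_{l=1}^{j}w^s_l$, since $\tau v_h$ and $-\frac{1}{c}\sum_{l\le\rho_{i-1}}w^s_l$ are constants over the candidate range; hence the cluster heads are exactly the strict suffix maxima of $G$ on $\{1,\ldots,h\}$ (ties broken to the right, matching the paper's choice of the maximum index), and your pop test $\tau(v_h-v_\rho)\le\widehat{\sigma}/c$ is literally $G(\rho)\le G(h)$, the textbook suffix-maxima update. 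With that observation substituted for the sketched telescoping step, the argument is complete.
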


\noindent
We then can see a corollary of these lemmas.

\begin{coro}\hspace{-1mm}{\rm \cite{hgk14_2,hgk14_4}}
For a scenario $s \in {\cal S}$, $m^s$ and $\Phi^s(m^s)$ can be computed in $O(n)$ time.
\label{coro:mst}
\end{coro}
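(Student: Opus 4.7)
The plan is to combine the two preceding lemmas directly; the corollary is essentially immediate once they are in hand. First I would invoke Lemma~\ref{lem:ms} to reduce the search space from the continuum $P$ to the finite vertex set $V$: since $m^s$ is guaranteed to coincide with some $v_i$, it suffices to examine $\Phi^s$ at the $n$ vertices of $P$ rather than at all points of the path.

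Next I would apply Lemma~\ref{lem:pht}, which delivers all $n$ values $\Phi^s(v_1), \Phi^s(v_2), \ldots, \Phi^s(v_n)$ in $O(n)$ total time. Equipped with these values, a single linear scan identifies an index $i^*$ minimizing $\Phi^s(v_i)$, and I would set $m^s := v_{i^*}$ and $\Phi^s(m^s) := \Phi^s(v_{i^*})$. By Lemma~\ref{lem:ms} this point is a genuine minimizer of $\Phi^s$ over all of $P$, not merely over $V$, so the output is correct.

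Finally I would tally the running time: $O(n)$ for the batch computation of the $\Phi^s(v_i)$ via Lemma~\ref{lem:pht}, plus $O(n)$ for the minimum scan, for a total of $O(n)$ as claimed. There is no real obstacle here; the only thing to be careful about is to cite Lemma~\ref{lem:ms} explicitly to justify that the discrete minimum over $V$ equals the continuous minimum over $P$, so that the scan indeed recovers $m^s$ and $\Phi^s(m^s)$ rather than only their restriction to vertices.
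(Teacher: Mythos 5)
Your proof is correct and matches the paper's (implicit) argument exactly: the paper presents this corollary as an immediate consequence of Lemma~\ref{lem:ms} (the median lies at a vertex) and Lemma~\ref{lem:pht} (all $\Phi^s(v_i)$ computable in $O(n)$ total time), followed by a linear scan for the minimum. Nothing is missing.
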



Now let us look at the formula of (\ref{eq:ms3}).
Even if $x$ is moving on an edge $e_i$ (not including endpoints $v_i$ and $v_{i+1}$), the formation of left clusters for $x$ does not change over $x \in e_i$.
Therefore, $\Phi^s_L(x)$ is a linear function of $x \in e_i$, and symmetrically, $\Phi^s_R(x)$ is also a linear function.
For $i \in \{1, \ldots, n-1\}$, letting $a^s_i$ and $b^s_i$ be the values such that for $x \in e_i$,
\begin{eqnarray}
\Phi^s(x) = a^s_ix + b^s_i,
\label{eq:aibi}
\end{eqnarray}
we can derive the following lemma from \cite{hgk14_2,hgk14_4}.

\begin{lem}
For a scenario $s \in {\cal S}$, all $a^s_i$ and $b^s_i$ over $i \in \{1, \ldots, n-1\}$ can be computed in $O(n)$ time in total.
\label{lem:abt}
\end{lem}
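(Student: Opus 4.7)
The plan is to split $\Phi^s(x)$ according to (\ref{eq:ms1}) and process each side by an incremental scan that maintains the cluster structure on a stack. By symmetry it suffices to handle the left side; the right side is analogous, using a right-to-left pass. Reading the coefficients off (\ref{eq:ms3}), I have $\Phi^s_L(x) = a^s_{L,i}x + b^s_{L,i}$ on $e_i$ with
\begin{equation*}
a^s_{L,i} \;=\; \tau\sum_{j}\sigma_j \;=\; \tau\sum_{l=1}^{i}w^s_l,
\qquad
b^s_{L,i} \;=\; -\tau\sum_{j}\sigma_j v_{\rho_j} + \frac{1}{2c}\sum_{j}\sigma_j^{2},
\end{equation*}
where the sums range over the left clusters associated with any $x \in e_i$. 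All $a^s_{L,i}$ follow from a trivial prefix-sum pass in $O(n)$ time, so the main work is to compute $b^s_{L,i}$ for $i=1,\ldots,n-1$.

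The key observation is a clean characterization of the cluster heads via the auxiliary function $F(j) := -\tau v_j + \tfrac{1}{c}\sum_{l=1}^{j}w^s_l$. The $j$-independent terms $\tau v_h$ and $\tfrac{1}{c}\sum_{l=1}^{\rho_{i-1}}w^s_l$ cancel out of (\ref{ms2.5}), so $\rho_i$ is the largest index in $\{\rho_{i-1}+1,\ldots,h\}$ maximizing $F$; in particular the heads are a strictly decreasing sequence in $F$-value (which is just a restatement of (\ref{eq:ms2.6})) ending at $\rho_k=h$. This lets me maintain the left clusters on a stack as $i$ grows. When $i$ advances to $i+1$, I pop every top cluster whose head $\rho$ satisfies $F(\rho)\le F(i+1)$, accumulating the popped weights; then I push a single new cluster with head $v_{i+1}$ whose weight equals the accumulated popped weight plus $w^s_{i+1}$. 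During each pop and each push I update the running values $\sum_{j}\sigma_j v_{\rho_j}$ and $\sum_{j}\sigma_j^{2}$ in $O(1)$, which immediately yields $b^s_{L,i+1}$.

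The main obstacle is verifying that this stack update really reproduces (\ref{ms2.5}) with the prescribed tiebreak (maximum index). Concretely I would check that (i) each retained head $\rho$ with $F(\rho)>F(i+1)$ is still the suffix-argmax on its range because no newly introduced index has larger $F$-value, and (ii) the newly pushed $v_{i+1}$ is the suffix-argmax on its extended range, since every index $j$ lying in a popped cluster satisfies $F(j)\le F(\rho_{\text{popped}})\le F(i+1)$ by the old argmax property of the popped heads, and ties are broken in favor of $i+1$ as the largest index. Once correctness is established, an amortized argument concludes: each vertex is pushed once and popped at most once across the entire scan, so the left pass runs in $O(n)$ total time. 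Combining with the symmetric right-to-left pass and summing $a^s_{L,i}+a^s_{R,i}$ and $b^s_{L,i}+b^s_{R,i}$ gives all $a^s_i$ and $b^s_i$ in $O(n)$ time.
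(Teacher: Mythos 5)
Your proof is correct. Note, however, that the paper does not actually prove Lemma~\ref{lem:abt}: it only asserts that the statement ``can be derived from'' \cite{hgk14_2,hgk14_4}, so there is no in-paper argument to compare against. What you have written is a genuine, self-contained derivation that is fully consistent with the clustering machinery of Section~\ref{subsec:to}. Your reduction of (\ref{ms2.5}) to suffix-argmaxima of $F(j)=-\tau v_j+\tfrac{1}{c}\sum_{l\le j}w^s_l$ is right (the $j$-independent terms do cancel), your identity $F(\rho_{i-1})-F(\rho_i)=\tau(v_{\rho_i}-v_{\rho_{i-1}})-\sigma_i/c>0$ correctly matches (\ref{eq:ms2.6}), and your choice to pop on $F(\rho)\le F(i+1)$ (rather than strict inequality) correctly implements the paper's maximum-index tie-break. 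The amortized push/pop accounting and the $O(1)$ maintenance of $\sum_j\sigma_j v_{\rho_j}$ and $\sum_j\sigma_j^2$ give the claimed $O(n)$ bound, and the symmetric right-to-left pass plus (\ref{eq:ms1}) finishes the job. This is almost certainly the intended argument behind the citation, and it doubles as a proof of Lemma~\ref{lem:pht} by evaluating the linear pieces at the vertices.
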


\section{Properties of worst case scenarios}
\label{sec:pwcs}

In this section, we show the important properties which worst case scenarios have.
In our problem, a main difficulty lies in evaluating $R^s(x)$ over $s \in {\cal S}$ to compute $R_{\rm max}(x)$ even for a fixed $x$ since 
the size of ${\cal S}$ is infinite.
We thus aim to find a scenario set with a finite size (in particular, a polynomial size)
which includes a worst case scenario for any $x \in P$.
In order to do this, we introduce a new concept, the {\it gap} between two points $x, y \in P$ under a scenario $s \in {\cal S}$, defined by
\begin{eqnarray}
\Gamma^s(x, y) = \Phi^s(x) - \Phi^s(y).
\label{eq:gap}
\end{eqnarray}
By Lemma \ref{lem:ms} and the definition of (\ref{eq:rg}), we have
\begin{eqnarray}
R^s(x) = \max \{ \Gamma^s(x, y) \mid y \in V \}, 
\label{eq:rg2}
\end{eqnarray}
and by (\ref{eq:rmax}) and (\ref{eq:rg2}),
\begin{eqnarray}
R_{\rm max}(x)	&=& \max \{ \max \{\Gamma^s(x, y) \mid y \in V \} \mid s \in {\cal S} \} \notag \\
			&=& \max \{ \max \{\Gamma^s(x, y) \mid s \in {\cal S} \} \mid y \in V \}.
\label{eq:rmax2}
\end{eqnarray}
From (\ref{eq:rmax2}), if we can compute $\max \{\Gamma^s(x, y) \mid s \in {\cal S} \}$ for a fixed pair $\langle x, y \rangle \in P \times V$,
$R_{\rm max}(x)$ can also be computed by repeating the same maximization over $y \in V$.
We call a scenario that maximizes $\Gamma^s(x, y)$ for a fixed $\langle x, y \rangle$ a worst case scenario for $\langle x, y \rangle$.
In the following, we show a scenario set of size $O(n)$ that includes a worst case scenario for a fixed $\langle x, y \rangle$,
which implies a scenario set of size $O(n^2)$ that includes a worst case scenario for a fixed $x$.

\subsection{Bipartite scenario}
\label{subsec:bs}
We first introduce the concept of the {\it bipartite scenario}, which was originally introduced as the {\it dominant scenario} in \cite{chknsx13,hacknsx14}.
Let us consider a scenario $s \in {\cal S}$.
A scenario $s$ is said to be {\it left-bipartite} (resp. {\it right-bipartite})
if $w^s_j = w^+_j$ (resp. $w^-_j$) over $j \in \{1, \ldots, i\}$ and 
$w^s_j = w^-_j$ (resp. $w^+_j$) over $j \in \{i+1, \ldots, n\}$ for some $i \in \{1, \ldots, n-1\}$.
Obviously the number of such scenarios is $O(n)$.
The authors of \cite{chknsx13,hacknsx14} treated the minimax regret 1-center problem in dynamic path networks,
which requires to find a sink location in a path that minimizes the maximum regret similarly defined as (\ref{eq:rmax})
where the completion time criterion is adopted instead of the total time one.
They proved that for any point in an input path, at least one worst case scenario is left-bipartite or right-bipartite.

\subsection{Pseudo-bipartite scenario}
\label{subsec:pbs}
We here introduce the concept of the {\it pseudo-bipartite scenario}.
A scenario $s$ is said to be {\it left-pseudo-bipartite} (resp. {\it right-pseudo-bipartite})
if $w^s_j = w^+_j$ (resp. $w^-_j$) over $j \in \{1, \ldots, i-1\}$ and 
$w^s_j = w^-_j$ (resp. $w^+_j$) over $j \in \{i+1, \ldots, n\}$ for some $i \in \{2, \ldots, n-1\}$.
In this definition, we do not care about the weight of a vertex $v_i$, called the {\it intermediate vertex}.
Given a pseudo-bipartite scenario with the intermediate vertex $v_i$, we call intervals $[v_1, v_i)$ and $(v_i, v_n]$ the {\it left part} and the {\it right part}, respectively.
Let ${\cal S}_L$ (resp. ${\cal S}_R$) denote a set of all left-pseudo-bipartite scenarios (resp. right-pseudo-bipartite scenarios).
We then prove the following lemma (the proof is given in Appendix A).
\begin{lem}
Given a pair $\langle x, y \rangle \in P \times V$ satisfying $y < x$ (resp. $x < y$), there exists a worst case scenario for $\langle x, y \rangle$ belonging to ${\cal S}_L$ (resp. ${\cal S}_R$) such that $y$ (resp. $x$) is in the left part and $x$ (resp. $y$) is in the right part.
\label{lem:wcsxy}
\end{lem}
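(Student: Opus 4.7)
The plan is to handle the case $y < x$ (the case $x < y$ is symmetric). Write $y = v_p$ and let $s^*$ be a worst case scenario for $\langle x, y \rangle$. The goal is to modify $s^*$ --- without decreasing $\Gamma^{s^*}(x,y)$ --- into a left-pseudo-bipartite scenario whose intermediate vertex $v_i$ satisfies $p < i$ and $v_i < x$. To this end I partition $V$ into Region~I $=\{v_j : v_j \leq y\}$, Region~II $=\{v_j : y < v_j < x\}$, and Region~III $=\{v_j : v_j \geq x\}$, and handle each region in turn.

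First I would show that $\Gamma^s(x,y)$ is non-decreasing in $w^s_j$ for $v_j \in$ Region~I and non-increasing for $v_j \in$ Region~III. For $v_j < y$, $\partial\Gamma/\partial w^s_j = \partial\Phi^s_L(x)/\partial w^s_j - \partial\Phi^s_L(y)/\partial w^s_j$, and by (\ref{eq:ms3}) each term equals the completion time of the corresponding left cluster containing $v_j$. The key inequality $\partial\Phi^s_L(x)/\partial w^s_j \geq \partial\Phi^s_L(y)/\partial w^s_j + \tau(x-y)$ is obtained by a structural comparison: since the argmax objective in (\ref{ms2.5}) for $x$ and for $y$ differs by only a constant shift on the common range $\{1,\ldots,p-1\}$, the two left clusterings agree on an initial segment of clusters and then a single ``merger'' $x$-cluster absorbs all remaining $y$-clusters. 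Plugging $j = \rho^y_{\ell_y}$ into the argmax defining the $x$-cluster of $v_j$, together with the weight-sum inequality implied by the merger structure, yields the claim. A symmetric argument handles Region~III. Hence we may safely set $w^{s^*}_j \leftarrow w^+_j$ on Region~I and $w^{s^*}_j \leftarrow w^-_j$ on Region~III.

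For Region~II the situation is more delicate. The marginal rate is $D_j(s) := \partial\Phi^s_L(x)/\partial w^s_j - \partial\Phi^s_R(y)/\partial w^s_j$, whose two terms are cluster completion times at $x$ and $y$ respectively. By (\ref{eq:ms2.6}) and the argmax property, the $x$-cluster completion time strictly decreases as the cluster index grows, so the first term is non-increasing in $j$; symmetrically the second term is non-decreasing in $j$. Thus $D_j(s)$ is pointwise non-increasing in $j$ at every scenario $s$. I would then run an exchange argument: if $v_{j_1} < v_{j_2}$ in Region~II both carry non-extreme weights in $s^*$, perturb by $w_{j_1} \mapsto w_{j_1} + \epsilon$ and $w_{j_2} \mapsto w_{j_2} - \epsilon$. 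The right derivative of $\Gamma$ in $\epsilon$ equals $D_{j_1}(s_\epsilon) - D_{j_2}(s_\epsilon) \geq 0$ throughout, so $\Gamma$ never decreases. Pushing $\epsilon$ until one weight hits its extreme and iterating across all such pairs leaves at most one vertex $v_i \in$ Region~II with interior weight, the rest exhibiting the threshold pattern. Combined with the modifications on Regions~I and~III, this produces a left-pseudo-bipartite scenario with intermediate vertex $v_i$ satisfying $p < i$ (since $y \in$ Region~I) and $v_i < x$ (since $v_i \in$ Region~II).

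The main obstacle I anticipate is the careful treatment of cluster-structure transitions during the exchange perturbation: as $\epsilon$ varies, cluster heads can jump and the one-sided derivatives of $\Gamma$ become discontinuous. One must verify that the pointwise monotonicity of $D_j$ --- which holds at each individual scenario --- propagates to the right derivative of $\Gamma$ in $\epsilon$, so that it stays non-negative across every such transition. A second necessary technical step is the ``merger'' structural comparison of the two left clusterings used for the Region~I bound. Finally, the lemma as stated implicitly assumes Region~II is non-empty; in the boundary case $x \in (v_p, v_{p+1}]$ no index $i \in \{2,\ldots,n-1\}$ can satisfy both $p < i$ and $v_i < x$, so this edge case appears to require a separate treatment beyond the pseudo-bipartite family defined in the paper.
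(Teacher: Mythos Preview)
Your three-region decomposition and the monotonicity/exchange arguments are exactly the route the paper takes; the Region~I analysis (comparing the $x$-cluster and $y$-cluster containing $v_j$, with the two cases ``same head'' versus ``$y$-cluster merged into a larger $x$-cluster headed in $[y,x)$'') matches the paper's Claims~1--3 almost line for line.

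The one place you make life harder than necessary is Region~II. The paper does not compute or compare the marginals $D_j$; it simply observes that if $y<v_i<v_j<x$ and one shifts weight $\delta$ from $v_j$ to $v_i$, then $\Phi(x)$ cannot decrease (weight moved farther from the sink $x$) and $\Phi(y)$ cannot increase (weight moved closer to the sink $y$), hence $\Gamma(x,y)$ cannot decrease. This is a global, finite-$\delta$ statement, so the cluster-structure transitions you flag as an obstacle never arise: there is no need to track one-sided derivatives through merges and splits. Your derivative-monotonicity route would also work, but it requires exactly the extra bookkeeping you anticipate, and it buys nothing over the simpler argument.

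Your boundary-case remark is well taken: when $x\in(v_p,v_{p+1}]$ there is no vertex strictly between $y=v_p$ and $x$, so no intermediate vertex can put $y$ in the left part and $x$ in the right part as literally stated. In that case, however, the construction already forces $w^+$ on $\{v_1,\dots,v_p\}$ and $w^-$ on $\{v_{p+1},\dots,v_n\}$, i.e.\ a genuine bipartite scenario, which is a degenerate pseudo-bipartite scenario (take $v_{p}$ or $v_{p+1}$ as the intermediate vertex with extreme weight); the downstream use of the lemma is unaffected.
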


\subsection{Critical pseudo-bipartite scenario}
\label{subsec:cpbs}
By Lemma \ref{lem:wcsxy}, we studied the property of a worst case scenario for a fixed $\langle x, y \rangle \in P \times V$,
however 
the sizes of ${\cal S}_L$ and ${\cal S}_R$
are still infinite since the weight of the intermediate vertex in a pseudo-bipartite scenario is not fixed.
In the rest of this section, we focus on the weight of the intermediate vertex in a pseudo-bipartite scenario
which is worst for $\langle x, y \rangle$.

Given a pair $\langle x, y \rangle \in P \times V$ satisfying $y < x$,
let us consider a scenario $s \in {\cal S}_L$ 
such that the intermediate vertex is $v_i$ and $y<v_i<x$.
Suppose that the weight of $v_i$ is set as the minimum, i.e., $w^s_i=w^-_i$.
Performing the right-clustering for $y$ under $s$ (mentioned in Section \ref{subsec:to}), 
we will get right clusters for $y$ such that 
for $l \in \{1, \ldots, k\}$,
the head of $l$-th cluster is $\rho_l$
and the weight of $l$-th cluster is $\sigma_l$.
Then, suppose that the intermediate vertex $v_i$ belongs to $j$-th cluster.

Now let us increase the weight of $v_i$, little by little, without changing the weight of any other vertex.
Let 
$s(w)$ be a scenario in ${\cal S}_L$ such that the intermediate vertex is $v_i$ whose weight is $w \in [w^-_i, w^+_i]$.
Suppose that when the weight of $v_i$ reaches some value $\omega$, the following equality holds:
\begin{eqnarray}
\tau(v_{\rho_{j-1}}-v_{\rho_j}) = \frac{\sigma_j+(\omega-w^-_i)}{c}.
\label{eq:cpbs1} 
\end{eqnarray}
Note that $\sigma_j+(\omega-w^-_i)$ corresponds to the weight of $j$-th cluster under $s(\omega)$.
At that moment, referring to (\ref{eq:ms2.6}), $(j-1)$-th cluster is merged to $j$-th cluster.
We then call $s(\omega)$ a {\it critical left-pseudo-bipartite scenario} for $y$.
Also, 
$s(w^-_i)$ and $s(w^+_i)$ are assumed to be critical left-pseudo-bipartite scenarios for $y$
even if any merge does not occur at those moments.
{\it Critical right-pseudo-bipartite scenarios} for $y$ are symmetrically defined.
Let ${\cal S}_y$ denote a set of all critical left-pseudo-bipartite scenarios and critical right-pseudo-bipartite scenarios for $y$,
and ${\cal S}^* = \bigcup_{y \in V}{\cal S}_y$.
We will show two lemmas
(the proof of Lemma \ref{lem:wcsxy2} is given in Appendix B).

\begin{lem}
Given a pair $\langle x, y \rangle \in P \times V$, there exists a worst case scenario for $\langle x, y \rangle$ belonging to ${\cal S}_y$.
\label{lem:wcsxy2}
\end{lem}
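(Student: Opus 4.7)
My plan is to reduce to the case $y<x$ (the other case being symmetric via ${\cal S}_R$) and, invoking Lemma~\ref{lem:wcsxy}, to look for a worst-case scenario among left-pseudo-bipartite scenarios whose intermediate vertex $v_i$ satisfies $y<v_i<x$. For such a fixed $v_i$ and $w\in[w_i^-,w_i^+]$, let $s(w)\in{\cal S}_L$ be the scenario that assigns weight $w$ to $v_i$ and the left-pseudo-bipartite template weights to all other vertices. I would then study $f(w)=\Gamma^{s(w)}(x,y)=\Phi^{s(w)}(x)-\Phi^{s(w)}(y)$ on $[w_i^-,w_i^+]$ and show that its maximum is attained either at an endpoint of this interval or at a value of $w$ where the merger equality~(\ref{eq:cpbs1}) holds; such $s(w)$ are by definition the critical left-pseudo-bipartite scenarios for $y$, and hence lie in ${\cal S}_y$.

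First, because $y<v_i<x$, neither $\Phi^s_L(y)$ nor $\Phi^s_R(x)$ depends on $w$, so $f(w)=\Phi^{s(w)}_L(x)-\Phi^{s(w)}_R(y)+C$ for some constant $C$. On any range of $w$ on which the left clusters for $x$ and the right clusters for $y$ under $s(w)$ are combinatorially unchanged, the only cluster whose weight depends on $w$---in either clustering---is the one containing $v_i$, and in both clusterings that weight has the form $\sigma^0+(w-w_i^-)$. Plugging into formula~(\ref{eq:ms3}) for $\Phi^{s(w)}_L(x)$ and into its symmetric counterpart for $\Phi^{s(w)}_R(y)$, the two quadratic contributions $(w-w_i^-)^2/(2c)$ cancel exactly and the remaining terms are linear in $w$; hence $f$ is piecewise linear on $[w_i^-,w_i^+]$, with breakpoints located precisely at those $w$ where a cluster merger occurs in one of the two clusterings.

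The crux is then to determine the sign of the slope jump of $f$ at each breakpoint. By~(\ref{eq:ms2.6}) the only combinatorial change that can occur in the left-clustering for $x$ as $w$ grows is that the cluster containing $v_i$ absorbs the cluster immediately to its left; a direct calculation using~(\ref{eq:ms3}) together with the merger equality shows that $\Phi^{s(w)}_L(x)$ is continuous at the breakpoint while its $w$-derivative jumps \emph{upward} by $\sigma_{\mathrm{prev}}/c>0$, where $\sigma_{\mathrm{prev}}$ is the weight of the absorbed cluster. By the analogous calculation at a right-merger for $y$ (a $w$ satisfying~(\ref{eq:cpbs1})), $\Phi^{s(w)}_R(y)$'s $w$-derivative jumps upward by a positive amount as well. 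Consequently, at every left-merger for $x$ the derivative $f'$ has a \emph{positive} jump (a convex kink of $f$), whereas at every right-merger for $y$ it has a \emph{negative} jump (a concave kink of $f$). A piecewise-linear function on a closed interval whose only interior kinks are of these two types attains its maximum either at an endpoint of the interval or at a concave kink---that is, at a value of $w$ at which~(\ref{eq:cpbs1}) holds---and the corresponding $s(w)$ lies in ${\cal S}_y$. The step I expect to require the most care is this derivative-jump computation, in particular verifying that when a single value of $w$ triggers a cascade of mergers (one absorption immediately enabling the next), each elementary step is still either a left-merger for $x$ or a right-merger for $y$, so the global ``convex versus concave'' pattern of kinks is preserved.
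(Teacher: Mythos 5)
Your proposal is correct and follows essentially the same route as the paper's Appendix~B: reduce via Lemma~\ref{lem:wcsxy} to varying the intermediate vertex's weight $w$, observe that the quadratic terms in (\ref{eq:ms3}) cancel so that $\Gamma(w)$ is continuous and piecewise linear, and show that the slope can only jump upward except at the weights where a right-cluster merger for $y$ occurs, i.e., at critical scenarios. The paper packages this as convexity of $\Gamma(w)$ on each interval between consecutive critical weights (an increasing step-function right-derivative), while you classify the kinks as convex versus concave, but the underlying computation and conclusion are identical.
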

\begin{lem}
For a vertex $y \in V$, 
the size of ${\cal S}_y$ is $O(n)$,
and all scenarios in ${\cal S}_y$ can be computed in $O(n)$ time.
\label{lem:csy}
\end{lem}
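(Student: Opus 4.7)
The plan is to organize the critical left-pseudo-bipartite scenarios for $y$ (the right-pseudo-bipartite ones are handled symmetrically) into a single continuous one-parameter family, and then argue that only $O(n)$ ``events'' can occur along it.

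Write $y = v_g$. For each valid intermediate index $i \in \{g+1, \ldots, n-1\}$ and each $w \in [w_i^-, w_i^+]$, let $s_i(w)$ denote the corresponding left-pseudo-bipartite scenario. Direct inspection of the definition shows $s_i(w_i^+) = s_{i+1}(w_{i+1}^-)$: both assign weight $w_j^+$ to every $v_j$ with $j \le i$ and weight $w_j^-$ to every $v_j$ with $j \ge i+1$. Gluing these pieces yields a continuous path of scenarios from $s_{g+1}(w_{g+1}^-)$ to $s_{n-1}(w_{n-1}^+)$, along which, at any moment, exactly one weight strictly increases while all others remain fixed.

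Next I would observe that along this path, the right-clusters for $y$ can only merge, never split. The cluster structure is governed by the right-clustering analog of the recursion (\ref{ms2.5}), and increasing a single weight $w_k$ monotonically increases exactly those partial sums whose range covers $v_k$; consequently, once two clusters coalesce, they cannot subsequently separate. Since $s_{g+1}(w_{g+1}^-)$ starts with at most $n-g$ right-clusters, and each merge reduces the cluster count by one, the total number of merge events along the path is at most $n$. By definition, ${\cal S}_y \cap {\cal S}_L$ consists of the two endpoints of the path, the internal transitions $s_i(w_i^+) = s_{i+1}(w_{i+1}^-)$ for $i \in \{g+1, \ldots, n-2\}$ (which already exhaust all boundary scenarios $s_i(w_i^-)$ and $s_i(w_i^+)$), and the merge scenarios, giving $|{\cal S}_y \cap {\cal S}_L| = O(n)$. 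The symmetric argument applied to critical right-pseudo-bipartite scenarios for $y$ yields $|{\cal S}_y \cap {\cal S}_R| = O(n)$, and therefore $|{\cal S}_y| = O(n)$.

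For the $O(n)$-time construction I would compute the right-clusters under $s_{g+1}(w_{g+1}^-)$ in $O(n)$ time (via the symmetric form of Lemma \ref{lem:pht}), store them in a doubly linked list whose nodes hold head position and cluster weight, and then simulate the deformation. At any moment the \emph{active} cluster is the one containing the current intermediate; the next event is either a merge, obtained by solving (\ref{eq:cpbs1}) against the active cluster's right neighbor in $O(1)$, or the active weight reaching its upper endpoint $w_i^+$, in which case the intermediate advances to $v_{i+1}$ and the active-cluster pointer moves at most one step along the list. Each event takes $O(1)$ time, and the total number of events is $O(n)$, so the construction runs in $O(n)$ time overall. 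The main technical obstacle is justifying the monotonicity claim that right-clusters never split as a single weight grows; once this is established by unpacking (\ref{ms2.5}), the remaining argument is essentially bookkeeping.
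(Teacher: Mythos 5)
Your proposal is correct and follows essentially the same route as the paper: sweep the one-parameter family of left-pseudo-bipartite scenarios in ascending order, observe that the only events are cluster merges, and amortize via the cluster count (your ``each merge decreases the number of right clusters by one'' potential argument is exactly what the paper formalizes with its $p_i \le q_{i-1}-q_i+2$ telescoping sum), then simulate with $O(1)$ work per event after an initial $O(n)$ clustering. The one point you flag as an obstacle --- that clusters never split as a single weight increases --- is treated just as lightly in the paper, which simply asserts that each critical scenario triggers a merge, so you are not missing anything the authors supply.
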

\begin{proof}
We first prove that 
the number of critical left-pseudo-bipartite scenarios for $y$ is $O(n)$
(the case of critical right-pseudo-bipartite scenarios is symmetric).
Suppose that $y = v_j$.
For $i \in \{j+1, \ldots, n\}$ and $w \in [w^-_i, w^+_i]$, let 
$s(i, w)$ be a scenario in ${\cal S}_L$
such that the intermediate vertex is $v_i$ whose weight is $w$.
Here, let us define the order between two scenarios $s(i, w)$ and $s(i', w')$:
$s(i, w) \prec s(i', w')$ holds if and only if 
(a) $i < i'$ or 
(b) $i = i'$ and $w < w'$. 
For $i \in \{j+1, \ldots, n\}$, we also define $p_i$ and $q_i$ as follows.
Let $p_i$ be the number of critical left-pseudo-bipartite scenarios for $y$ such that the intermediate vertex is $v_i$
(including $s(i, w^-_i)$ and $s(i, w^+_i)$).
Let $q_i$ be, under a scenario $s(i, w^+_i)$, the number of right clusters for $y$ that follow a cluster including $v_i$.

Let us consider computing all critical left-pseudo-bipartite scenarios for $y$ in ascending order,
and suppose that the weight of $v_i$ now increases from $w^-_i$ to $w^+_i$.
While it increases,
since $p_i-2$ critical left-pseudo-bipartite scenarios for $y$ occur (except $s(i, w^-_i)$ and $s(i, w^+_i)$),
and at each such scenario, one or more clusters are merged into one of $v_i$,
at least $p_i-2$ clusters are merged into one of $v_i$ in total.
We thus have $q_i \le q_{i-1} - (p_i-2)$ for $i \in \{j+1, \ldots, n\}$, i.e.,
\begin{eqnarray}
p_i \le q_{i-1} - q_i + 2.
\label{eq:lem1.21} 
\end{eqnarray}
Note that the total number of critical left-pseudo-bipartite scenarios for $y$ is exactly $1+\sum_{i=j+1}^n (p_i-1)$.
By (\ref{eq:lem1.21}), we have 
\begin{eqnarray}
\sum_{i=j+1}^n (p_i-1) \le \sum_{i=j+1}^n (q_{i-1} - q_i + 1) = q_j - q_n + (n-j),
\label{eq:lem1.22}
\end{eqnarray}
which is $O(n)$ since $q_j \le n-j$ and $q_n = 0$.

In the rest of the proof, we show that all critical left-pseudo-bipartite scenarios for $y = v_j$ can be computed in $O(n)$ time.
Recall that all critical left-pseudo-bipartite scenarios for $y$ are computed in ascending order.
The algorithm first gets $s(j+1, w^-_{j+1})$,
and performs the right clustering for $y$ under $s(j+1, w^-_{j+1})$.
As claimed in \cite{hgk14_2,hgk14_4}, it is easy to see that the right clustering for a fixed $y$ can be done in $O(n)$ time.

Suppose that for particular $i \in \{j+1, \ldots, n\}$ and $\omega \in [w^-_i, w^+_i]$, $s(i, \omega)$ is critical for $y$,
and the algorithm has already obtained $s(i, \omega)$ and the right clusters for $y$.
We then show how to compute the subsequent critical left-pseudo-bipartite scenario.
Let $c_y$ be a right cluster for $y$ including $v_i$ and $c'_y$ be a right cluster for $y$ immediately following $c_y$.
Also, let $\rho_y$ (resp. $\rho'_y$) be the index of a vertex that corresponds to the head of $c_y$ (resp. $c'_y$),
and $\sigma_y$ (resp. $\sigma'_y$) be the weight of $c_y$ (resp. $c'_y$).

There are two cases: 
[Case 1] $\omega < w^+_i$; 
[Case 2] $\omega = w^+_i$.
For Case 2,
we notice that $s(i+1, w^-_{i+1})$ is equivalent to $s(i, w^+_i)$.
Therefore, this case immediately results in Case 1 by letting $i$ be $i+1$ and $\omega$ be $w^-_{i+1}$
(although a right cluster for $y$ including $v_{i+1}$ may be $c'_y$, not $c_y$).
We thus consider only Case 1 in the following.

The algorithm will compute the subsequent critical left-pseudo-bipartite scenario $s(i, \omega')$
where $\omega'$ satisfies $\omega < \omega' \le w^+_i$.
In order to compute $\omega'$, the algorithm test if there exists $w \in (\omega, w^+_i]$ such that
\begin{eqnarray}
\tau(v_{\rho'_y}-v_{\rho_y})=\frac{\sigma_y + (w - \omega)}{c},
\label{eq:lem1.23}
\end{eqnarray}
which is similar to (\ref{eq:cpbs1}).
If yes, for such $w$, the algorithm returns $\omega'=w$ and updates the right clusters for $y$ by merging $c'_y$ into $c_y$.
Otherwise, $\omega'=w^+_i$ is just returned. 
Such testing and updating are done in $O(1)$ time.
Since the number of critical left-pseudo-bipartite scenarios for $y$ is $O(n)$ and each of those is computed in $O(1)$ time,
we completes the proof.
\qed
\end{proof}
By (\ref{eq:rmax2}), we have a corollary of Lemma \ref{lem:wcsxy2}.
\begin{coro}
Given a point $x \in P$, there exists a worst case scenario for $x$ belonging to ${\cal S}^*$.
\label{coro:wcsx}
\end{coro}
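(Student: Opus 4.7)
The plan is to obtain Corollary \ref{coro:wcsx} as a direct consequence of the double-maximum decomposition (\ref{eq:rmax2}) combined with the pairwise result Lemma \ref{lem:wcsxy2}. Fix an arbitrary $x \in P$ and swap the order of the two maxima in (\ref{eq:rmax2}) so that for each $y \in V$ Lemma \ref{lem:wcsxy2} can be applied in isolation; the scenarios it produces all lie in ${\cal S}^*$ by construction, and the only remaining issue is to select the right one.

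Concretely, I would proceed as follows. For each $y \in V$, invoke Lemma \ref{lem:wcsxy2} to pick a scenario $s_y \in {\cal S}_y$ that attains $\max_{s \in {\cal S}} \Gamma^s(x, y)$. Then choose $y^* \in V$ maximizing the outer quantity, so that by (\ref{eq:rmax2})
\begin{equation*}
R_{\rm max}(x) \;=\; \Gamma^{s_{y^*}}(x, y^*).
\end{equation*}
Since $s_{y^*} \in {\cal S}_{y^*} \subseteq {\cal S}^*$, it remains only to verify that $s_{y^*}$ is itself a worst case scenario for $x$ in the sense of (\ref{eq:rmax}), i.e., that $R^{s_{y^*}}(x) = R_{\rm max}(x)$. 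This follows from the sandwich
\begin{equation*}
R_{\rm max}(x) \;=\; \Gamma^{s_{y^*}}(x, y^*) \;\leq\; \max_{y \in V} \Gamma^{s_{y^*}}(x, y) \;=\; R^{s_{y^*}}(x) \;\leq\; R_{\rm max}(x),
\end{equation*}
where the middle equality is (\ref{eq:rg2}) and the right-hand inequality is the definition of $R_{\rm max}$.

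There is no real obstacle: the whole argument is a one-line manipulation of maxima resting on Lemma \ref{lem:wcsxy2}. The only point requiring a touch of care — rather than a substantive difficulty — is not to conflate a \emph{worst case scenario for the pair} $\langle x, y \rangle$, which is what Lemma \ref{lem:wcsxy2} supplies, with a \emph{worst case scenario for the single point} $x$, which is what the corollary demands; the sandwich step above is exactly what converts the former into the latter, by exploiting the fact that $y^*$ was selected to attain the outer maximum.
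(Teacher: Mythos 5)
Your argument is correct and is exactly the paper's route: the paper derives Corollary~\ref{coro:wcsx} directly from the exchange of maxima in (\ref{eq:rmax2}) together with Lemma~\ref{lem:wcsxy2}, and your sandwich step merely spells out the detail the paper leaves implicit. No issues.
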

Also, a corollary of Lemma \ref{lem:csy} immediately follows.
\begin{coro}
The size of ${\cal S}^*$ is $O(n^2)$,
and all scenarios in ${\cal S}^*$ can be computed in $O(n^2)$ time.
\label{coro:cusy}
\end{coro}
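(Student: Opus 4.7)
The plan is to obtain the corollary as an immediate consequence of Lemma \ref{lem:csy} by summing over all choices of the reference vertex $y \in V$. Since $\mathcal{S}^*$ is defined as the union $\bigcup_{y \in V} \mathcal{S}_y$, both the cardinality bound and the running time bound decompose naturally over the $n$ vertices of $V$.

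First, I would bound the size. By Lemma \ref{lem:csy}, each set $\mathcal{S}_y$ contains $O(n)$ critical pseudo-bipartite scenarios. Taking the union over the $n$ vertices $y \in V$ yields $|\mathcal{S}^*| \le \sum_{y \in V} |\mathcal{S}_y| = n \cdot O(n) = O(n^2)$. There is no need to worry about scenarios being double-counted between different $\mathcal{S}_y$'s, since we only need an upper bound.

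Next, I would bound the running time. Again by Lemma \ref{lem:csy}, for any fixed $y \in V$, all scenarios in $\mathcal{S}_y$ can be enumerated in $O(n)$ time. Iterating this procedure over every $y \in V$ computes every scenario in $\mathcal{S}^*$ in total time $\sum_{y \in V} O(n) = O(n^2)$.

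Since both arguments are literally just a sum of $n$ copies of the per-vertex bound from Lemma \ref{lem:csy}, there is no real obstacle here; the only thing to note is that Lemma \ref{lem:csy} is stated for a single vertex $y$, so the corollary is obtained by invoking it $n$ times, once per vertex of $V$, and combining the resulting sets and computations. This yields the stated $O(n^2)$ size and $O(n^2)$ running time for $\mathcal{S}^*$.
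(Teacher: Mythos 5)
Your proposal is correct and matches the paper exactly: the paper states this corollary as following immediately from Lemma \ref{lem:csy} by taking the union $\mathcal{S}^* = \bigcup_{y \in V}\mathcal{S}_y$ over the $n$ vertices, which is precisely your summation argument for both the size and the running time.
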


\section{Algorithm}
\label{sec:al}
In this section, we show an algorithm that computes the minimax regret median, which minimizes $R_{\rm max}(x)$ over $x \in P$.
The algorithm basically consists of two phases: \\
{\bf [Phase 1]} Compute $R_{\rm max}(v_i)$ over $i \in \{1, \ldots, n\}$, and \\
{\bf [Phase 2]} Compute $\min \{R_{\rm max}(x) \mid x \in e_i\}$ over $i \in \{1, \ldots, n-1\}$. \\
After these, the algorithm evaluates all the $2n-1$ values obtained and finds the minimax regret median in $O(n)$ time.

By Corollary \ref{coro:wcsx}, we only have to consider scenarios in ${\cal S}^*$ to compute $R_{\rm max}(x)$ for any $x \in P$.
Therefore, the algorithm computes all scenarios in ${\cal S}^*$ in advance, which can be done in $O(n^2)$ time by Corollary \ref{coro:cusy}.
Subsequently, it computes all the values $\Phi^s(m^s)$ over $s \in {\cal S}^*$ for Phase 1 and Phase 2.
By Corollaries \ref{coro:mst} and \ref{coro:cusy}, this can be done in $O(n^3)$ time in total.  

First let us see details in Phase 1. 
For a fixed scenario $s \in {\cal S}^*$,
since all $\Phi^s(v_i)$ over $i \in \{1, \ldots, n\}$ can be computed in $O(n)$ time by Lemma \ref{lem:pht}
and $\Phi^s(m^s)$ has already been computed before Phase 1,
all $R^s(v_i)$ over $i \in \{1, \ldots, n\}$ can also be computed in $O(n)$ time (refer to (\ref{eq:rg})).
After the algorithm obtains $R^s(v_1), \ldots, R^s(v_n)$ over $s \in {\cal S}^*$ in $O(n^3)$ time,
for each $i \in \{1, \ldots, n\}$,
$R^s(v_i)$ over $s \in {\cal S}^*$ are evaluated to obtain $R_{\rm max}(v_i)$.
Thus, it is easy to see that Phase 1 can be done in $O(n^3)$ time in total.

We next focus on Phase 2. 
As mentioned at the end of Section \ref{subsec:kp}, 
for a fixed scenario $s \in {\cal S}^*$, 
$\Phi^s(x)$ is a linear function of $x \in e_i$ for each $i \in \{1, \ldots, n-1\}$ (not including $v_i$ and $v_{i+1}$).
Therefore, $R^s(x)$ is also linear for $x \in e_i$ for each $i$.
Referring to (\ref{eq:aibi}), a function $R^s(x)$ on an edge $e_i$ is written as
\begin{eqnarray}
R^s(x) = a^s_ix + b^s_i - \Phi^s(m^s).
\label{eq:aibi2}
\end{eqnarray}
Recall that $\Phi^s(m^s)$ has already been computed.
Then, by Lemma \ref{lem:abt}, $R^s(x)$ on $e_i$ over $i \in \{1, \ldots, n-1\}$ can be computed in $O(n)$ time.
After the algorithm does the same computation over $s \in {\cal S}^*$ in $O(n^3)$ time,
on each edge $e_i$,
we have $O(n^2)$ linear functions $R^s(x)$ over $s \in {\cal S}^*$.
By the definition of (\ref{eq:rmax}), $\min\{R_{\rm max}(x) \mid x \in e_i\}$ can be obtained by solving a linear programming problem in two dimensions with $O(n^2)$ constraints,
i.e.,
\begin{eqnarray}
\mbox{minimize}	& \ \ \ \ \ \	& y 													\notag \\
\mbox{subject to} 	& 		& a^s_ix + b^s_i - \Phi^s(m^s) \le y, \ \ \ \ \ \ \forall s \in {\cal S}^*	\notag \\
				&		& v_i \le x \le v_{i+1}.										\notag
\end{eqnarray}
This problem can be solved in $O(n^2)$ time by \cite{d84}.
Repeating the same operations over $i \in \{1, \ldots, n-1\}$, Phase 2 is completed in $O(n^3)$ time.

\begin{thm}
The minimax regret 1-median problem in dynamic path networks with uniform capacity can be solved in $O(n^3)$ time.
\label{thm:1}
\end{thm}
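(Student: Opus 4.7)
The plan is to show that the algorithm already sketched in Section~\ref{sec:al} runs in $O(n^3)$ time, with correctness resting on Corollary~\ref{coro:wcsx}: the maximum regret at any point $x \in P$ is attained by some scenario in ${\cal S}^*$. Since by Lemma~\ref{lem:ms} the median under every scenario lies at a vertex, and by the edge-linearity noted in Section~\ref{subsec:kp} the function $\Phi^s(x)$ is affine on each open edge, $R_{\rm max}(x)$ is the pointwise maximum of affine functions on each edge; hence the minimax regret median is either a vertex or a minimizer of $R_{\rm max}$ inside some edge. Splitting the search into Phase~1 (vertex values) and Phase~2 (edge minima) therefore captures all candidates, and I would conclude by taking the best of the resulting $2n-1$ values in $O(n)$ time.

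First I would precompute the data shared by both phases. Corollary~\ref{coro:cusy} generates all $O(n^2)$ scenarios of ${\cal S}^*$ in $O(n^2)$ time, and Corollary~\ref{coro:mst} then produces $\Phi^s(m^s)$ for each such scenario in $O(n)$ time, so tabulating all these reference values costs $O(n^3)$ in total. For Phase~1, I would process each $s \in {\cal S}^*$ independently: Lemma~\ref{lem:pht} computes all $\Phi^s(v_i)$ in $O(n)$ time, and subtracting the stored $\Phi^s(m^s)$ gives every $R^s(v_i)$. Ranging over ${\cal S}^*$ this takes $O(n^3)$ time, after which $R_{\rm max}(v_i) = \max_{s \in {\cal S}^*} R^s(v_i)$ by Corollary~\ref{coro:wcsx}, computable by a final scan.

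For Phase~2, for every $s \in {\cal S}^*$, Lemma~\ref{lem:abt} supplies the coefficients $a^s_i, b^s_i$ on all edges in $O(n)$ time, so in $O(n^3)$ time I have at my disposal, on each edge $e_i$, a family of $O(n^2)$ linear functions $R^s(x) = a^s_i x + b^s_i - \Phi^s(m^s)$, one per scenario. Minimizing $R_{\rm max}(x) = \max_{s \in {\cal S}^*} R^s(x)$ over $x \in [v_i, v_{i+1}]$ is then a two-dimensional linear program with $O(n^2)$ constraints.

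The main obstacle is precisely this last step: a naive upper-envelope construction would cost $\Theta(n^2 \log n)$ per edge and hence $\Theta(n^3 \log n)$ overall, overshooting the target bound. I would overcome this by invoking a linear-time LP algorithm in fixed dimension (as in \cite{d84}), which solves each edge's subproblem in $O(n^2)$ time and thus handles all $n-1$ edges in $O(n^3)$ time. Summing the costs of precomputation, Phase~1, and Phase~2 yields the claimed $O(n^3)$ bound.
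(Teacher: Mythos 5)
Your proposal is correct and follows essentially the same route as the paper: the same precomputation of ${\cal S}^*$ and of $\Phi^s(m^s)$ via Corollaries~\ref{coro:cusy} and~\ref{coro:mst}, the same two phases resting on Corollary~\ref{coro:wcsx} and Lemmas~\ref{lem:pht} and~\ref{lem:abt}, and the same reduction of each edge's subproblem to a two-variable linear program solved by Dyer's algorithm. The only additions---the explicit justification that vertices and edge-interior minimizers exhaust the candidates, and the remark on why an upper-envelope construction would overshoot the bound---are harmless elaborations of what the paper leaves implicit.
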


\section{Conclusion}
\label{sec:co}
In this paper, we address the minimax regret 1-median problem in dynamic path networks with uniform capacity and present an $O(n^3)$ time algorithm. 
Additionally, this is the first work that treats the minimax regret facility location problem in dynamic networks where the total time criterion is adopted.
Two natural questions immediately follow.
The first one is whether we can reduce the number of scenarios to be considered.
The other one is whether we can extend the problem to the $k$-median version with $k \ge 2$,
or the problem in more general networks.  


\clearpage

\section*{Appendices}

\subsection*{Appendix A: Proof of Lemma \ref{lem:wcsxy}}
\label{app:a}

We only treat the case of $\langle x, y \rangle \in P \times V$ satisfying $y < x$ since the other case is symmetric.

We first prove the following claim.
\begin{clm}
Given a pair $\langle x, y \rangle \in P \times V$ satisfying $y < x$,
there exists a worst case scenario for $\langle x, y \rangle$ such that 
the weight of every vertex $v_i \in [v_1, y]$ is $w^+_i$.
\label{clm:2}
\end{clm}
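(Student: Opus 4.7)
The plan is an exchange argument. Starting from an arbitrary worst case scenario $s^*$ for $\langle x, y \rangle$, I would show that if $w^{s^*}_i < w^+_i$ for some $v_i \in [v_1, y]$, then raising $w^{s^*}_i$ to $w^+_i$ (leaving every other weight unchanged) does not decrease $\Gamma^{s^*}(x,y) = \Phi^{s^*}(x) - \Phi^{s^*}(y)$. Iterating this replacement over all such $v_i$ yields a worst case scenario of the form asserted. The boundary case $v_i = y$ is immediate, since $\Phi^s(y)$ does not depend on $w_y$ (units at the sink have evacuation time $0$), whereas $\Phi^s(x)$ is strictly increasing in $w_y$.

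The core of the proof is therefore to show that, with every other weight fixed, the function $w_i \mapsto \Gamma^s(x,y)$ is non-decreasing on $[w^-_i, w^+_i]$ when $v_i < y$. This function is continuous and piecewise linear, with breakpoints only at values of $w_i$ where the left-clustering structure for $x$ or $y$ undergoes a merger, so it suffices to verify that its slope is non-negative on each linear piece. Because $v_i$ lies strictly to the left of both $y$ and $x$, $w_i$ influences only the left clusters, and applying the cluster formula (\ref{eq:ms3}) to both $\Phi^s(x)$ and $\Phi^s(y)$ expresses the slope as $A(x) - A(y)$, where
\[
A(z) \;=\; \tau(z - v_{\rho^z}) + \sigma^z/c,
\]
and $v_{\rho^z}$, $\sigma^z$ are the head and weight of the left cluster containing $v_i$ when the sink is at $z \in \{x,y\}$.

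I would establish $A(x) \ge A(y)$ via a physical interpretation of $A(z)$, not by a direct case analysis on how the two cluster structures are related — these can genuinely differ once the move from sink $y$ to sink $x$ triggers a cluster merger. The key interpretation is that $A(z)$ equals the arrival time at sink $z$ of an infinitesimal test mass added at $v_i$. Writing $y = v_j$, I would argue that for every $k \in \{i,\ldots,j-1\}$, edge $e_k$ carries precisely the same flow under both sinks, namely the rightward flow generated by units at $v_1,\ldots,v_k$; indeed, any unit strictly to the right of $v_k$ is destined for a sink at or past $v_j > v_k$, and therefore never crosses $e_k$ leftward, regardless of whether the sink is $y$ or $x$. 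Consequently, the test mass at $v_i$ experiences identical dynamics on its journey to $y$ in both scenarios and reaches $y$ at the same moment — precisely time $A(y)$. Under sink $x$, it must then traverse the segment from $y$ to $x$, which takes at least $\tau(x-y) > 0$ additional time, giving $A(x) \ge A(y) + \tau(x-y) > A(y)$.

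The main obstacle I anticipate is making the flow-invariance argument on $e_i,\ldots,e_{j-1}$ fully rigorous within the continuous FIFO model, and carefully linking the cluster-based quantity $A(z)$ to the actual arrival time of an infinitesimal test mass at $v_i$. Once these two links are pinned down, the inequality $A(x) \ge A(y)$ follows at once, and the exchange argument closes the proof.
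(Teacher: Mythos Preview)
Your exchange strategy and slope computation match the paper exactly: its increment formula $\Gamma^{s_2}(x,y)-\Gamma^{s_1}(x,y)=\delta\tau(x-y-v_{\rho_x}+v_{\rho_y})+\delta(\sigma_x-\sigma_y)/c$ is precisely $\delta(A(x)-A(y))$ in your notation. Where you diverge is in establishing $A(x)\ge A(y)$. The paper does not use flow invariance; it runs a two-case analysis on how the left cluster $c_x$ for $x$ containing $v_i$ relates to the left cluster $c_y$ for $y$: either they coincide ($\rho_x=\rho_y$, $\sigma_x=\sigma_y$, giving $A(x)-A(y)=\tau(x-y)$), or the head of $c_x$ lies in $[y,x)$, in which case the merge inequality $\tau(v_{\rho_x}-v_{\rho_y})\le(\sigma_x-\sigma_y)/c$ yields $A(x)-A(y)\ge\tau(x-y)$ directly.

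Your flow-invariance route is genuinely different and conceptually attractive, but the link you propose to pin down --- that $A(z)$ equals the arrival time at $z$ of an infinitesimal test mass at $v_i$ --- is false as stated, and this is not merely a rigor issue. The marginal unit appended at $v_i$ is the last unit of its cluster only when $v_i$ is the cluster's leftmost vertex; in general the derivative $A(z)$ equals the test mass's own arrival time \emph{plus} the delay it imposes on all units behind it in the cluster, and only that sum gives $\tau(z-v_{\rho^z})+\sigma^z/c$. So the identity you plan to prove will not hold.

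The repair is easy and keeps your argument intact: interpret $A(z)$ instead as the arrival time at $z$ of the \emph{last unit of the cluster $c_z$ containing $v_i$} (this is immediate from the formula). Your flow-invariance observation then says the last unit of $c_y$ reaches $y$ at time $A(y)$ under either sink, hence reaches $x$ no earlier than $A(y)+\tau(x-y)$. To conclude $A(x)\ge A(y)+\tau(x-y)$ you still need one structural fact you have not addressed: $c_y\subseteq c_x$, so that the last unit of $c_y$ actually belongs to $c_x$ and therefore arrives by time $A(x)$. This inclusion is exactly what underlies the paper's case split, and follows because the left-clustering for $x$ restricted to $[v_1,y)$ coarsens the left-clustering for $y$. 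With that inclusion proved, your route closes and is arguably cleaner than the paper's algebraic case analysis; without it, there is a real gap.
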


\noindent
{\it Proof of Claim \ref{clm:2}.} \
We here assume $y > v_1$: if $y = v_1$, the proof is straightforward.
Let $s_1$ be a worst case scenario for $\langle x, y \rangle$.
If there are more than one worst case scenario, 
we choose the one such that all weights are lexicographically maximized in the order of ascending indices
among all worst case scenarios.
We prove by contradiction: suppose $w^{s_1}_i < w^+_i$ for some vertex $v_i \in [v_1, y]$. 
Under a scenario $s_1$, let us perform the left-clustering for $x$ and $y$, respectively (refer to Section \ref{subsec:to}).
Performing the left-clustering for $x$,
let $c_x$ be a left cluster for $x$ including $v_i$, 
$\rho_x$ be the index of a vertex that corresponds to the head of $c_x$,
and $\sigma_x$ be the weight of $c_x$.
Also, performing the left-clustering for $y$, $c_y$, $\rho_y$ and $\sigma_y$ are similarly defined. 
When $c_y$ evacuates to $x$, 
by the definition of a cluster, the first unit of $c_y$ does not encounter any congestion in the interval $(v_{\rho_y}, y)$,
however, it may encounter congestions at some vertices in $[y, x)$.
If it does not encounter any congestion in $(v_{\rho_y}, x)$, 
$c_y$ is never merged to any other cluster for $x$,
which implies $\rho_x =\rho_y$ and $\sigma_x = \sigma_y$.
Otherwise, $c_y$ is eventually merged into a left cluster for $x$ headed by a vertex $v_{\rho_x} \in [y, x)$, i.e., $c_x$.
We thus consider two cases: 
[Case 1] $\rho_x = \rho_y$ and $\sigma_x = \sigma_y$; 
[Case 2] $\rho_y < y \le \rho_x$ and $\sigma_x > \sigma_y$.

Now let $s_2$ be a scenario obtained from $s_1$ by increasing the weight of $v_i$ by infinitesimally small $\delta > 0$, 
i.e., $w^{s_2}_i=w^{s_1}_i+\delta$ and $w^{s_2}_j=w^{s_1}_j$ for $j \not= i$.
We then show the following claim for the proof of Claim \ref{clm:2}.
\begin{clm}
While $s_1$ changes to $s_2$, 
the formation of left clusters for $x$ (resp. $y$) remains the same.
\label{clm:1}
\end{clm}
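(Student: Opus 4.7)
\textbf{Proof plan for Claim \ref{clm:1}.} The plan is to argue that the left-cluster formation for any target $z \in \{x, y\}$ is a combinatorial object determined by a finite collection of strict inequalities, and to verify that each such inequality is preserved when $w_i$ is increased by an infinitesimal amount $\delta > 0$. I treat $z = x$ and $z = y$ uniformly, since each invocation relies only on (\ref{ms2.5}) with the maximum-index tie-breaking rule and the strict separation (\ref{eq:ms2.6}). Fix such a $z$, let $\rho_1 < \cdots < \rho_k$ denote the cluster heads under $s_1$, and let $m$ be the unique index with $\rho_{m-1}+1 \leq i \leq \rho_m$, so that $v_i$ lies in the $m$-th left cluster for $z$.

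The key quantitative observation is that perturbing $w_i$ by $\delta$ alters the expression $\tau(v_{h^\star}-v_j) + \tfrac{1}{c}\sum_{l=\rho_{c-1}+1}^{j} w^s_l$ inside the argmax in (\ref{ms2.5}) by exactly $\delta/c$ when $j \geq i$, and leaves it unchanged when $j < i$; here $h^\star$ is the right endpoint of the interval over which clusters are being formed. Hence the recursive computations for clusters of index less than $m$ depend on no affected weight, so those clusters are literally unchanged under $s_2$; and once $\rho_m$ is shown to be unchanged, the clusters of index greater than $m$ (whose recursion starts at $\rho_m + 1 > i$) also depend on no affected weight and are therefore unchanged as well.

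It remains to verify that the head $\rho_m$ and the separation (\ref{eq:ms2.6}) for the $m$-th cluster persist. Because $i \leq \rho_m$, the score of $\rho_m$ in (\ref{ms2.5}) increases by exactly $\delta/c$ when passing from $s_1$ to $s_2$; any competitor $j$ in the same range either satisfies $j \geq i$ (its score also increases by $\delta/c$, so any prior tie is preserved and the maximum-index rule still selects $\rho_m$) or $j < i$ (its score is unchanged and $\rho_m$ now strictly dominates). Thus $\rho_m$ remains the chosen head. The separation $\tau(v_{\rho_m}-v_{\rho_{m-1}}) > \sigma_m/c$ becomes $\tau(v_{\rho_m}-v_{\rho_{m-1}}) > (\sigma_m+\delta)/c$, which still holds for all sufficiently small $\delta > 0$ by the strictness in (\ref{eq:ms2.6}), while all other separation inequalities are literally unchanged. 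The main subtlety I expect is precisely the tie-breaking in the argmax of (\ref{ms2.5}), since the perturbation breaks ties only in one direction; the case analysis above confirms that this direction is consistent with the maximum-index convention, and the claim then follows.
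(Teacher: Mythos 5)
Your proof is correct, but it takes a genuinely different route from the paper's. The paper never touches the argmax recursion (\ref{ms2.5}): it observes that the only cluster whose weight changes is the one containing $v_i$, so the only separation inequality of type (\ref{eq:ms2.6}) that could fail under $s_2$ is the one between $c_x$ and the cluster $c'_x$ immediately following it, namely $\tau(v_{\rho_x}-v_{\rho'_x}) > \sigma_x/c$; being strict, it survives the replacement of $\sigma_x$ by $\sigma_x+\delta$ for sufficiently small $\delta>0$, and the paper concludes via the physical catch-up interpretation that no merge, hence no change of formation, occurs (and likewise for $y$). Your argument instead verifies head-by-head that every $\rho_c$ remains the maximizer in (\ref{ms2.5}); this is more self-contained, since it does not lean on the informal assertion that ``if $c'_x$ is not merged to $c_x$, any other merge never occurs,'' at the cost of the explicit tie-breaking case analysis. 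One sentence of yours is imprecise: for clusters of index $c<m$ the recursion does \emph{not} ``depend on no affected weight,'' because the argmax in (\ref{ms2.5}) ranges over all $j$ up to $h$, including $j\ge i$, whose scores increase by $\delta/c$. The conclusion still holds --- the winning index $\rho_c<i$ has unchanged score and, by the maximum-index tie-breaking, strictly dominates every $j>\rho_c$ under $s_1$, so it continues to dominate for sufficiently small $\delta$ --- but this is exactly the case analysis you carry out for cluster $m$, and it must be invoked for the clusters of index $c<m$ as well.
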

{\it Proof of Claim \ref{clm:1}.}  \
Performing the left-clustering for $x$ under $s_1$,
let $c'_x$ be a left cluster for $x$ immediately following $c_x$,
and $\rho'_x$
be the index of a vertex that corresponds to the head of $c'_x$.
Referring to (\ref{eq:ms2.6}), the following inequality holds:
\begin{eqnarray}
\tau(v_{\rho_x} - v_{\rho'_x}) > \frac{\sigma_x}{c} = \frac{\sum_{l=\rho'_x+1}^{\rho_x} w^{s_1}_l}{c}.
\label{eq:clm1.1}
\end{eqnarray}
Then, for a sufficiently small $\delta>0$, we have
\begin{eqnarray}
\tau(v_{\rho_x} - v_{\rho'_x}) > \frac{(\sum_{l=\rho'_x+1}^{\rho_x} w^{s_1}_l)+\delta}{c} = \frac{\sum_{l=\rho'_x+1}^{\rho_x} w^{s_2}_l}{c}.
\label{eq:clm1.2}
\end{eqnarray}
The inequality of (\ref{eq:clm1.1}) means that after performing the left-clustering for $x$ under $s_1$, the first unit of $c'_x$ does not catch up with the last unit of $c_x$ at $v_{\rho_x}$,
and by (\ref{eq:clm1.2}), this remark also holds even for $s_2$. 
Under $s_2$, if $c'_x$ is not merged to $c_x$, any other merge never occurs.
Thus, the formation of left clusters for $x$ does not change,
and similarly, it does not change for $y$. 
\qed

\medskip

By Claim \ref{clm:1} and the definitions of (\ref{eq:ms1}) and (\ref{eq:ms3}), we have 
\begin{eqnarray}
\Phi^{s_2}(x) - \Phi^{s_1}(x)	&=& \Phi^{s_2}_L(x) - \Phi^{s_1}_L(x) \notag \\
						&=& (\sigma_x+\delta)\tau(x - v_{\rho_x}) + \frac{(\sigma_x+\delta)^2}{2c} - \left\{\sigma_x\tau(x - v_{\rho_x}) + \frac{\sigma_x^2}{2c}\right\} \notag \\
						&=& \delta\tau(x - v_{\rho_x}) + \frac{\delta\sigma_x}{c} + \frac{\delta^2}{2c},
\label{eq:lem1.1}
\end{eqnarray}
and similarly, 
\begin{eqnarray}
\Phi^{s_2}(y) - \Phi^{s_1}(y)	&=& \delta\tau(y - v_{\rho_y}) + \frac{\delta\sigma_y}{c} + \frac{\delta^2}{2c}.
\label{eq:lem1.2}
\end{eqnarray}
Also by the definition of (\ref{eq:gap}), we have
\begin{eqnarray}
\Gamma^{s_2}(x, y) - \Gamma^{s_1}(x, y)	&=& \Phi^{s_2}(x) - \Phi^{s_2}(y) - (\Phi^{s_1}(x) - \Phi^{s_1}(y)) \notag \\
								&=& \Phi^{s_2}(x) - \Phi^{s_1}(x) - (\Phi^{s_2}(y) - \Phi^{s_1}(y)).
\label{eq:lem1.3}
\end{eqnarray}
From (\ref{eq:lem1.1}), (\ref{eq:lem1.2}) and (\ref{eq:lem1.3}), we can derive 
\begin{eqnarray}
\Gamma^{s_2}(x, y) - \Gamma^{s_1}(x, y)	&=& \delta\tau(x - v_{\rho_x}) + \frac{\delta\sigma_x}{c} + \frac{\delta^2}{2c} - \left\{\delta\tau(y - v_{\rho_y}) + \frac{\delta\sigma_y}{c} + \frac{\delta^2}{2c}\right\} \notag \\
								&=& \delta\tau(x - y - v_{\rho_x} + v_{\rho_y}) + \frac{\delta(\sigma_x-\sigma_y)}{c}.
\label{eq:lem1.4}
\end{eqnarray}

If Case 1 occurs, 
we can immediately see that the right side of (\ref{eq:lem1.4}) is greater than zero, i.e., $\Gamma^{s_2}(x, y) > \Gamma^{s_1}(x, y)$,
which contradicts that $s_1$ is a worst case scenario for $\langle x, y \rangle$.

If Case 2 occurs, 
performing the left-clustering for $x$ merges $c_y$ into $c_x$ as mentioned above,
and then, all units on $(v_{\rho_y}, v_{\rho_x}]$ are also merged into $c_x$.
Therefore, if we consider an input such that all units on $(v_{\rho_y}, v_{\rho_x})$ are moved to $v_{\rho_x}$,
the first unit of $c_y$ must catch up with the last unit of supply at $v_{\rho_x}$,
i.e.,
\begin{eqnarray}
\tau(v_{\rho_x}-v_{\rho_y}) &\le& \frac{\sum_{l=\rho_y+1}^{\rho_x} w^{s_1}_l}{c}.
\label{eq:lem1.5}
\end{eqnarray}
Since $c_x$ includes $c_y$ and all units on $(v_{\rho_y}, v_{\rho_x}]$, we have
\begin{eqnarray}
\sum_{l=\rho_y+1}^{\rho_x} w^{s_1}_l \le \sigma_x-\sigma_y.
\label{eq:lem1.6}
\end{eqnarray}
Note that in (\ref{eq:lem1.6}), the left side is less than the right side when $c_x$ also includes some clusters for $y$ following $c_y$.
From (\ref{eq:lem1.4}), (\ref{eq:lem1.5}) and (\ref{eq:lem1.6}), we can derive 
\begin{eqnarray}
\Gamma^{s_2}(x, y) - \Gamma^{s_1}(x, y)	\ge \delta\tau(x - y) > 0,
\label{eq:lem1.7}
\end{eqnarray}
which contradicts that $s_1$ is a worst case scenario for $\langle x, y \rangle$.
\qed

\medskip

If we consider a worst case scenario for $\langle x, y \rangle$ such that the weight of every vertex $v_i \in [v_1, y]$ is $w^+_i$ and weights of all other vertices in $(y, v_n]$ are lexicographically minimized in the order of descending indices,
the following claim is also proved in a similar manner as in the proof of Claim \ref{clm:2}.
\begin{clm}
Given a pair $\langle x, y \rangle \in P \times V$ satisfying $y < x$,
there exists a worst case scenario for $\langle x, y \rangle$ such that 
the weight of every vertex $v_i \in [v_1, y]$ is $w^+_i$ and 
the weight of every vertex $v_i \in [x, v_n]$ is $w^-_i$.
\label{clm:3}
\end{clm}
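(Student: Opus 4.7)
The plan is to mirror the proof of Claim~\ref{clm:2} on the right side of $x$. Starting from a worst case scenario $s_1$ for $\langle x, y\rangle$ that already satisfies Claim~\ref{clm:2} (so $w^{s_1}_i = w^+_i$ for every $v_i \in [v_1, y]$), I would choose, among all such worst case scenarios, the one whose weights on $(y, v_n]$ are lexicographically minimized in the order of \emph{descending} indices. For contradiction, assume $w^{s_1}_i > w^-_i$ for some $v_i \in [x, v_n]$, and let $s_2$ be the scenario obtained from $s_1$ by decreasing $w_i$ by an infinitesimally small $\delta > 0$.

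My first step is the right-side analog of Claim~\ref{clm:1}: for all sufficiently small $\delta > 0$, both the right-clusters for $x$ and those for $y$ are unchanged when we pass from $s_1$ to $s_2$, by exactly the strict-inequality argument used in Claim~\ref{clm:1} based on (\ref{eq:ms2.6}). Let $c_x$ (resp.\ $c_y$) denote the right-cluster for $x$ (resp.\ $y$) that contains $v_i$, with head index $\rho_x$ (resp.\ $\rho_y$) and total weight $\sigma_x$ (resp.\ $\sigma_y$); here the head is the \emph{leftmost} vertex of the cluster, i.e., the one closest to the sink. Because a right-cluster for $y$ may absorb additional units from $(v_j, v_h]$ (with $y = v_j$ and $x \in e_h$) that a right-cluster for $x$ cannot see, we always have $\rho_y \le \rho_x$ and $\sigma_y \ge \sigma_x$, and the two cases analogous to Case~1 and Case~2 in Claim~\ref{clm:2} are Case~A ($\rho_y = \rho_x$, $\sigma_y = \sigma_x$) and Case~B ($\rho_y < \rho_x$, $\sigma_y > \sigma_x$). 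Using the right-side analog of (\ref{eq:ms3}), only $\sigma_x$ and $\sigma_y$ change, each by $-\delta$, and a direct computation of the change in the gap gives
\begin{eqnarray}
\Gamma^{s_2}(x, y) - \Gamma^{s_1}(x, y) = \delta \tau \bigl((x - y) - (v_{\rho_x} - v_{\rho_y})\bigr) + \frac{\delta(\sigma_y - \sigma_x)}{c}. \notag
\end{eqnarray}
In Case~A this equals $\delta\tau(x - y) > 0$. In Case~B, the right-side counterparts of (\ref{eq:lem1.5}) and (\ref{eq:lem1.6}) yield $\tau(v_{\rho_x} - v_{\rho_y}) \le (\sigma_y - \sigma_x)/c$, so the right-hand side above is still bounded below by $\delta\tau(x - y) > 0$. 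Either way $\Gamma^{s_2}(x, y) > \Gamma^{s_1}(x, y)$, contradicting that $s_1$ is worst case for $\langle x, y\rangle$.

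The main obstacle I expect is Case~B. One must argue carefully that the extra weight $\sigma_y - \sigma_x$ equals the sum of weights of exactly those vertices in $[v_{\rho_y}, v_{\rho_x - 1}]$ that the right-cluster for $y$ absorbs when $c_x$ merges, on its way from $x$ toward $y$, with one or more right-clusters originating in $(v_j, v_h]$, and to show that the defining right-clustering condition at the merged head $v_{\rho_y}$ implies $c\tau(v_{\rho_x} - v_{\rho_y}) \le \sigma_y - \sigma_x$. This is the direct symmetric counterpart of the pair (\ref{eq:lem1.5})--(\ref{eq:lem1.6}) in Claim~\ref{clm:2}, but the right-clustering recursion has to be spelled out to justify it, in particular when multiple cluster-merges occur along the way to $y$.
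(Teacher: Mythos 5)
Your proposal is correct and takes essentially the same approach as the paper: the paper's own proof of Claim~\ref{clm:3} is a one-sentence reference that takes a worst case scenario satisfying Claim~\ref{clm:2} with weights on $(y, v_n]$ lexicographically minimized in descending index order and then argues ``in a similar manner as in the proof of Claim~\ref{clm:2}'', which is exactly the symmetric perturbation argument you spell out. Your Case~A/Case~B dichotomy, the sign computation for the change in the gap, and the right-side counterparts of (\ref{eq:lem1.5})--(\ref{eq:lem1.6}) are the correct mirror images of the paper's Case~1/Case~2 analysis.
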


Now, let $s_3$ be a worst case scenario for $\langle x, y \rangle$ such that the weight of every vertex $v_i \in [v_1, y]$ is $w^+_i$, 
the weight of every vertex $v_i \in [x, v_n]$ is $w^-_i$, and
weights of all other vertices in the open interval $(y, x)$ are lexicographically maximized in the order of ascending indices.
Then, $s_3 \in {\cal S}_L$ can be proved.
We prove by contradiction: there exist two vertices $v_i, v_j \in (y, x)$ satisfying $i < j$ such that $w^{s_3}_i < w^+_i$ and $w^{s_3}_j > w^-_j$.
Let $s_4$ be a scenario obtained from $s_3$ by increasing the weight of $v_i$ by infinitesimally small $\delta > 0$ and
decreasing the weight of $v_j$ by the same $\delta$, 
i.e., $w^{s_4}_i=w^{s_3}_i+\delta$, $w^{s_4}_j=w^{s_3}_j-\delta$ and $w^{s_4}_k=w^{s_3}_k$ for $k \not= i, j$.
Then, we immediately see $\Phi^{s_4}(x)  \ge \Phi^{s_3}(x)$ and $\Phi^{s_4}(y) \le \Phi^{s_3}(y)$, therefore
\begin{eqnarray}
\Phi^{s_4}(x)-\Phi^{s_4}(y)	&\ge&	\Phi^{s_3}(x)-\Phi^{s_3}(y), \notag \ \ \ \ \mbox{i.e.,} \\
\Gamma^{s_4}(x, y)		&\ge&	\Gamma^{s_3}(x, y).
\label{eq:lem1.8}
\end{eqnarray}
The inequality of (\ref{eq:lem1.8}) implies that $s_4$ is also a worst case scenario for $\langle x, y \rangle$,
which contradicts the lexicographical maximality of weights on the open interval $(y, x)$ under $s_3$.
\qed

\subsection*{Appendix B: Proof of Lemma \ref{lem:wcsxy2}}
\label{app:b}
For a fixed pair $\langle x, y \rangle \in P \times V$ satisfying $y < x$, 
let us consider a worst case scenario 
in ${\cal S}_L$ such that the intermediate vertex is $v_i$ ($y \le v_i < x$).
We now consider the weight of $v_i$ as a variable $w \in [w^-_i, w^+_i]$,
and let 
$s(w)$ be a scenario in ${\cal S}_L$
such that the intermediate vertex is $v_i$ whose weight is $w$.
Then, let $\Gamma(w)$ denote the gap between $x$ and $y$ under $s(w)$, i.e.,
\begin{eqnarray}
\Gamma(w) = \Phi^{s(w)}(x)-\Phi^{s(w)}(y).
\label{eq:lem1.9}
\end{eqnarray}
Suppose that $s(w)$ is critically left-pseudo-bipartite for $y$ when $w = \omega_1, \ldots, \omega_p$,
where $p$ is a positive integer and $w^-_i = \omega_1 < \ldots < \omega_p = w^+_i$.
In the following, we prove that a function $\Gamma(w)$ is convex and piecewise-linear for $w \in [\omega_j, \omega_{j+1}]$ for every $j \in \{1, \ldots, p-1\}$.

Under a scenario $s(w)$, let us perform the left-clustering for $x$ and the right-clustering for $y$, respectively.
Performing the left-clustering for $x$,
let $c_x(w)$ be a left cluster for $x$ including $v_i$, 
$\rho_x(w)$ be the index of a vertex that corresponds to the head of $c_x(w)$,
and $\sigma_x(w)$ be the weight of $c_x$.
Also, performing the right-clustering for $y$, $c_y(w)$, $\rho_y(w)$ and $\sigma_y(w)$ are similarly defined. 

We first show the following claim.
\begin{clm}
$\Gamma(w)$ is continuous for $w \in [w^-_i, w^+_i]$.
\label{clm:4}
\end{clm}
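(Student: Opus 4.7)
My plan is to reduce continuity of $\Gamma(w)$ to continuity of two one-sided evacuation times, show smoothness on intervals where no cluster merge occurs, and then verify continuity at each merge point using the boundary case of inequality (\ref{eq:ms2.6}). Under $s(w) \in \mathcal{S}_L$ with intermediate vertex $v_i$ satisfying $y \le v_i < x$, the only vertex whose weight depends on $w$ is $v_i$ itself. Since $v_i \notin [v_1,y)$, the quantity $\Phi^{s(w)}_L(y)$ is constant in $w$; similarly, since $v_i \notin (x,v_n]$, $\Phi^{s(w)}_R(x)$ is constant in $w$. Using (\ref{eq:ms1}) we may therefore write $\Gamma(w) = \Phi^{s(w)}_L(x) - \Phi^{s(w)}_R(y) + C$ for some constant $C$, so it suffices to prove that $\Phi^{s(w)}_L(x)$ and $\Phi^{s(w)}_R(y)$ are each continuous.

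As $w$ varies over $[w^-_i,w^+_i]$, the combinatorial structure of the left clusters for $x$ is piecewise constant and changes only at finitely many values of $w$ at which two adjacent clusters merge. On each open subinterval between consecutive such events, only one cluster (the one containing $v_i$) has a weight depending on $w$, and the dependence is linear; so by formula (\ref{eq:ms3}), $\Phi^{s(w)}_L(x)$ is a quadratic polynomial of $w$ and hence continuous there. The argument for $\Phi^{s(w)}_R(y)$ is entirely analogous using the right-clustering version of (\ref{eq:ms3}).

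The main obstacle is continuity at a merging value $w^*$. Suppose that at $w^*$ a cluster with head $v_\alpha$ and weight $\sigma_\alpha$ merges into the next cluster closer to $x$, which has head $v_\beta$ and weight $\sigma_\beta$; by the boundary case of (\ref{eq:ms2.6}), this merge occurs exactly when $\tau(v_\beta - v_\alpha) = \sigma_\beta/c$. The merged cluster has head $v_\beta$ and weight $\sigma_\alpha + \sigma_\beta$, while the contribution of these clusters to $\Phi^{s(w)}_L(x)$ just before the merge is $\sigma_\alpha\tau(x-v_\alpha) + \sigma_\alpha^2/(2c) + \sigma_\beta\tau(x-v_\beta) + \sigma_\beta^2/(2c)$, and just after the merge is $(\sigma_\alpha+\sigma_\beta)\tau(x-v_\beta) + (\sigma_\alpha+\sigma_\beta)^2/(2c)$. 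A direct computation shows the difference equals $-\sigma_\alpha\tau(v_\beta - v_\alpha) + \sigma_\alpha\sigma_\beta/c$, which vanishes by the critical equality, giving $\lim_{w\to w^{*-}}\Phi^{s(w)}_L(x) = \lim_{w\to w^{*+}}\Phi^{s(w)}_L(x)$. A symmetric computation handles merges in the right-clustering for $y$, completing the proof.
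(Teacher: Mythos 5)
Your proof is correct, and it takes a route that differs from the paper's in a useful way. The paper reduces continuity to the one-sided limits (\ref{eq:lem1.10}) and (\ref{eq:lem1.11}), argues (as in Claim \ref{clm:1}) that the cluster formation is unchanged under a small increase of $w$, and reads off the increment formula (\ref{eq:lem1.13}), which vanishes as $\delta \to +0$; the left-limit case is dismissed as ``similarly treated.'' That dismissal is actually the delicate point: at a critical weight $\omega$ the formation for $w = \omega - \delta$ (clusters separate) differs from the formation at $w = \omega$ (clusters merged), so the ``formation remains the same'' argument does not apply verbatim there. Your proof fills exactly this gap: after decomposing $\Gamma(w) = \Phi^{s(w)}_L(x) - \Phi^{s(w)}_R(y) + C$ (valid since $v_i \in [y,x)$ makes the other two terms of (\ref{eq:ms1}) constant in $w$) and noting that (\ref{eq:ms3}) gives a quadratic polynomial in $w$ between merge events, you verify by the telescoping computation $-\sigma_\alpha\tau(v_\beta - v_\alpha) + \sigma_\alpha\sigma_\beta/c = 0$ that the boundary case of (\ref{eq:ms2.6}) forces the two polynomial pieces to agree at each merge value. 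What the paper's approach buys is brevity and a formula, (\ref{eq:lem1.13}), that is reused later for the derivative computation; what yours buys is an explicit and airtight treatment of continuity across the merge points, which is where the paper is thinnest. One minor point worth a sentence in a polished write-up: if several merges coincide at the same $w^*$, your two-cluster computation should be applied repeatedly (each merge is of adjacent clusters, and increasing $w$ never splits or re-heads the cluster containing $v_i$), but this is routine.
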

\noindent
{\it Proof of Claim \ref{clm:4}.} \
The statement is equivalent to
\begin{eqnarray}
\lim_{\delta \to +0} \Gamma(w+\delta)	&=&	\Gamma(w) \quad \forall w \in [w^-_i, w^+_i), \quad \mbox{and} \label{eq:lem1.10} \\
\lim_{\delta \to +0} \Gamma(w-\delta)		&=&	\Gamma(w) \quad \forall w \in (w^-_i, w^+_i]. \label{eq:lem1.11}
\end{eqnarray}
We here prove (\ref{eq:lem1.10}) (the case of (\ref{eq:lem1.11}) is similarly treated).
By (\ref{eq:lem1.9}), we only have to prove
\begin{eqnarray}
\lim_{\delta \to +0} \Phi^{s(w+\delta)}(x) = \Phi^{s(w)}(x) \quad \forall w \in [w^-_i, w^+_i).
\label{eq:lem1.12}
\end{eqnarray}
Note that, similarly as in Claim \ref{clm:1}, while $s(w)$ changes to $s(w+\delta)$, 
the formation of left clusters for $x$ remains the same.
Therefore, by the definitions of (\ref{eq:ms1}) and (\ref{eq:ms3}), we have 
\begin{eqnarray}
\Phi^{s(w+\delta)}(x)	&=& \Phi^{s(w)}(x) +  \delta\tau(x - v_{\rho_x(w)}) + \frac{\delta\sigma_x(w)}{c} + \frac{\delta^2}{2c},
\label{eq:lem1.13}
\end{eqnarray}
which leads (\ref{eq:lem1.12}) by letting $\delta$ go to positive zero.
\qed

\medskip

For an integer $j \in \{1, \ldots, p-1\}$, we consider the right-derivative of $\Gamma(w)$ for $w \in [\omega_j, \omega_{j+1})$, i.e.,
\begin{eqnarray}
\Gamma'_+(w) = \lim_{\delta \to +0} \frac{\Gamma(w+\delta)-\Gamma(w)}{\delta}.
\label{eq:lem1.14}
\end{eqnarray}
Similarly to (\ref{eq:lem1.13}), we have
\begin{eqnarray}
\Phi^{s(w+\delta)}(y)	&=& \Phi^{s(w)}(y) +  \delta\tau(v_{\rho_y(w)}-y) + \frac{\delta\sigma_y(w)}{c} + \frac{\delta^2}{2c}. \label{eq:lem1.15} 
\end{eqnarray}
From (\ref{eq:lem1.9}), (\ref{eq:lem1.13}) and (\ref{eq:lem1.15}), we derive
\begin{eqnarray}
\Gamma(w+\delta)-\Gamma(w)	&=&	\Phi^{s(w+\delta)}(x)-\Phi^{s(w+\delta)}(y)- \left\{ \Phi^{s(w)}(x)-\Phi^{s(w)}(y) \right\} \notag \\
						&=&	\delta\tau(x + y - v_{\rho_x(w)} - v_{\rho_y(w)}) + \frac{\delta\left\{ \sigma_x(w)-\sigma_y(w) \right\}}{c},		
\label{eq:lem1.16} 
\end{eqnarray}
and by (\ref{eq:lem1.14}) and (\ref{eq:lem1.16}),
\begin{eqnarray}
\Gamma'_+(w)	=	\tau(x + y - v_{\rho_x(w)} - v_{\rho_y(w)}) + \frac{\sigma_x(w)-\sigma_y(w)}{c}.	
\label{eq:lem1.17} 
\end{eqnarray}
We here notice that as $w$ increases, $\rho_x(w)$ and $\rho_y(w)$ never change 
(even if left clusters for $x$ following $c_x(w)$ and right clusters for $y$ following $c_y(w)$ are merged to $c_x(w)$ and $c_y(w)$, respectively).
Also, since both of $c_x(w)$ and $c_y(w)$ include $v_i$, 
$\sigma_x(w)$ and $\sigma_y(w)$ can be represented as follows:
\begin{eqnarray}
\sigma_x(w)	&=& w + \sigma^j_x(w), \quad \mbox{and} \label{eq:lem1.18} \\
\sigma_y(w)	&=& w + \sigma^j_y(w), \label{eq:lem1.19} 
\end{eqnarray}
where $\sigma^j_x(w)$ and $\sigma^j_y(w)$ are functions of $w$.
In addition, $\sigma^j_x(w)$ increases only if a left cluster for $x$ following $c_x(w)$ is merged to $c_x(w)$, 
i.e., $\sigma^j_x(w)$ is an increasing step function,
and $\sigma^j_y(w)$ is a constant function of $w \in [\omega_j, \omega_{j+1})$ since the formation of right clusters for $y$ does not change over $w \in [\omega_j, \omega_{j+1})$
(recall the definition of critical left-pseudo-bipartite scenarios for $y$ in Section \ref{subsec:cpbs}).
From the above observations, and (\ref{eq:lem1.17}), (\ref{eq:lem1.18}) and (\ref{eq:lem1.19}), we derive that for $w \in [\omega_j, \omega_{j+1})$,
\begin{eqnarray}
\Gamma'_+(w)	=	\frac{\sigma^j_x(w)}{c} + const.,
\label{eq:lem1.20} 
\end{eqnarray}
which is an increasing step function.
By this fact and the continuity of $\Gamma(w)$ by Claim \ref{clm:4}, we have the following claim.
\begin{clm}
For an integer $j \in \{1, \ldots, p-1\}$, $\Gamma(w)$ is convex and piecewise-linear for $w \in [\omega_j, \omega_{j+1}]$.
\label{clm:5}
\end{clm}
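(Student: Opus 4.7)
The plan is to derive Claim~\ref{clm:5} directly from equation (\ref{eq:lem1.20}) together with the continuity of $\Gamma(w)$ established in Claim~\ref{clm:4}, without any further explicit computation. First I would observe that on the half-open interval $[\omega_j, \omega_{j+1})$ the function $\sigma^j_x(w)$ is a non-decreasing step function: it can only change value at those $w$ where a left cluster for $x$ immediately following $c_x(w)$ is merged into $c_x(w)$, and each such merge strictly increases $\sigma^j_x(w)$ by the weight of the absorbed cluster. Since the total number of left clusters for $x$ is at most $n$, there are only finitely many such jumps. Hence $\Gamma'_+(w) = \sigma^j_x(w)/c + \mathrm{const.}$ is piecewise constant and non-decreasing on $[\omega_j, \omega_{j+1})$.

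Next I would argue piecewise-linearity subinterval by subinterval. On each maximal subinterval $[w_1, w_2) \subseteq [\omega_j, \omega_{j+1})$ on which $\sigma^j_x$ is constant, and therefore $\Gamma'_+$ equals some fixed slope $m$, the map $w \mapsto \Gamma(w) - \Gamma(w_1) - m(w-w_1)$ has identically zero right-derivative on $[w_1,w_2)$ and vanishes at $w_1$; applied between consecutive jump points this gives $\Gamma(w) = \Gamma(w_1) + m(w-w_1)$ throughout $[w_1,w_2)$, and Claim~\ref{clm:4} extends the identity to the closed subinterval $[w_1,w_2]$ by continuity. Concatenating the finitely many linear pieces yields a piecewise-linear function on $[\omega_j, \omega_{j+1})$, and a final application of continuity at $w=\omega_{j+1}$ extends the description to the closed interval $[\omega_j, \omega_{j+1}]$.

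Convexity is then immediate: the successive linear pieces have non-decreasing slopes because $\Gamma'_+$ is non-decreasing, which is the standard characterization of a convex piecewise-linear function. The only non-routine step in this plan is the bridging argument that uses both Claim~\ref{clm:4} and equation (\ref{eq:lem1.20}) simultaneously to promote piecewise-constancy of $\Gamma'_+$ (defined only as a one-sided derivative) into genuine piecewise-linearity of $\Gamma$; I would handle this with the mean value theorem applied between consecutive merge points, which is a one-liner once continuity is in hand. No other obstacle is expected, and the claim follows.
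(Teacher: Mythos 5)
Your proposal follows essentially the same route as the paper, which derives Claim~\ref{clm:5} directly from (\ref{eq:lem1.20}) (that $\Gamma'_+(w)$ is an increasing step function) together with the continuity from Claim~\ref{clm:4}; the paper in fact states the conclusion in one line, whereas you fill in the standard analysis details (finitely many jumps, integrating the piecewise-constant right-derivative, non-decreasing slopes giving convexity). The proof is correct and, if anything, more explicit than the original.
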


By Claim \ref{clm:5}, a solution that maximizes $\Gamma(w)$ must be in $\{ \omega_1, \ldots, \omega_p \}$,
i.e., a worst case scenario for $\langle x, y \rangle$ is critically left-pseudo-bipartite for $y$.
\qed



\end{document}